\documentclass[aps,pra,reprint,twocolumn,superscriptaddress,nofootinbib]{revtex4-1}
\pdfoutput=1 

\usepackage{graphicx}
\usepackage{bm}
\usepackage{amsmath}
\usepackage{amssymb}
\usepackage{amsfonts}
\usepackage{amsthm}
\usepackage{mathtools}
\usepackage{hyperref}
\usepackage{braket}
\usepackage[normalem]{ulem}
\usepackage{wrapfig}
\usepackage{tikz}
\usepackage{dsfont}
\usepackage{comment}
\usepackage{thmtools,thm-restate}
\mathtoolsset{showonlyrefs}
\usepackage[export]{adjustbox}

\usepackage{soul} 

\hypersetup{colorlinks=true,urlcolor=[rgb]{0,0,0.5},citecolor=[rgb]{0.5,0,0},linkcolor=[rgb]{0,0,0.4}}

\usetikzlibrary{decorations.pathmorphing}

\newtheorem{theorem}{Theorem}
\newtheorem{definition}{Definition}
\newtheorem{corollary}{Corollary}
\newtheorem{lemma}{Lemma}
\newtheorem{proposition}{Proposition}
\newtheorem*{theorem*}{Theorem}

\newtheorem*{task*}{Task}
\newtheorem*{proposition*}{Proposition}

\def\autorefapp#1{\hyperref[#1]{Appendix~\ref{#1}}}

\def\and{\quad {\rm and} \quad}

\setlength{\skip\footins}{24pt}
\raggedbottom

\DeclarePairedDelimiter{\brk}{[}{]}

\makeatletter
\def\Pr{\@ifnextchar[{\@witha}{\@withouta}}
\def\@witha[#1]{\mathop{\operator@font Pr}_{#1}\brk}
\def\@withouta{\mathop{\operator@font Pr}\brk}
\makeatother

\makeatletter
\def\E{\@ifnextchar[{\@withb}{\@withoutb}}
\def\@withb[#1]{\mathop{\mathbb{E}}_{#1}\brk}
\def\@withoutb{\mathop{\mathbb{E}}\brk}
\makeatother

\makeatletter
\def\Var{\@ifnextchar[{\@withc}{\@withoutc}}
\def\@withc[#1]{\mathop{\mathbb{V}}_{#1}\brk}
\def\@withoutc{\mathop{\mathbb{V}}\brk}
\makeatother

\newtheorem*{CM3p2}{Theorem 3.2 of~\cite{collins2017weingarten}}

\newtheorem*{ACQl6}{Lemma 6 of~\cite{aharonov2021quantum}}

\newtheorem*{rephrase}{Theorem 2$\,'$}

\begin{document}
\title{Information-theoretic Hardness of Out-of-time-order Correlators}

\author{Jordan Cotler}
\email{jcotler@fas.harvard.edu}
\affiliation{Society of Fellows, Harvard University, Cambridge, MA, USA}
\affiliation{Black Hole Initiative, Harvard University, Cambridge, MA, USA}

\author{Thomas Schuster}
\affiliation{Google Quantum AI, 340 Main Street, Venice, CA 90291, USA}
\affiliation{Department of Physics, University of California, Berkeley, California 94720 USA}

\author{Masoud Mohseni}
\affiliation{Google Quantum AI, 340 Main Street, Venice, CA 90291, USA}

\begin{abstract}
We establish that there are properties of quantum many-body dynamics which are efficiently learnable if we are given access to out-of-time-order correlators (OTOCs), but which require exponentially many operations in the system size if we can only measure time-ordered correlators.  This implies that any experimental protocol which reconstructs OTOCs solely from time-ordered correlators must be, in certain cases, exponentially inefficient.  Our proofs leverage and generalize recent techniques in quantum learning theory.  Along the way, we elucidate a general definition of time-ordered versus out-of-time-order experimental measurement protocols, which can be considered as classes of adaptive quantum learning algorithms.  Moreover, our results provide a theoretical foundation for novel applications of OTOCs in quantum simulations.
\end{abstract}

\maketitle

\section{Introduction}

\vspace{-.1cm}

There has been a surge of interest in quantum many-body chaos in recent years, in part due to the popularization of computable probes such as out-of-time-order correlators (OTOCs)~\cite{larkin1969quasiclassical, kitaev2014hidden, maldacena2016bound}. These correlators capture the quantum analog of the butterfly effect, wherein small changes to initial conditions lead to exponentially large changes at later times.  This theoretical tool motivated an experimental interest in measuring the quantum butterfly effect, leading to a myriad of proposals for implementations~\cite{swingle2016measuring,yao2016interferometric,yoshida2019disentangling,vermersch2019probing,qi2019measuring} which have been actualized in progressively more sophisticated nuclear magnetic resonance (NMR) setups~\cite{baum1985multiple,li2017measuring,sanchez2021emergent,dominguez2021decoherence}, quantum many-body simulators~\cite{garttner2017measuring,joshi2020quantum}, and quantum computers~\cite{landsman2019verified,blok2020quantum,mi2021information}.

A notable feature of OTOCs, embedded in their name, is that they involve correlations within a system that are extracted by evolving the system both forwards and backwards in time in an alternating fashion.  This is unlike more conventional experimental correlators in which probes are only introduced as time progresses forward. While such time reversal mid-experiment is possible for certain experimental setups~\cite{baum1985multiple,garttner2017measuring,mi2021information}, it is impractical for most systems.  Indeed, in response to this difficulty, many proposed experimental protocols for measuring OTOCs do not entail reversing time directly, but rather reconstruct the effects of time reversal from experiments having only forwards time evolution~\cite{yao2016interferometric,vermersch2019probing,qi2019measuring,joshi2020quantum}.

It is interesting to ask if the ability to reverse time mid-experiment could be a valuable tool for learning properties of systems in nature.  Indeed, the works mentioned above appear to suggest that the quantum butterfly effect necessitates the reversal of time in order to be measured efficiently.  More generally, it seems plausible that the ability to reverse time could enable certain properties of an experimental system to be revealed with greater efficiency.   A recent work by the authors explored this idea with extensive numerical examples, which indicated affirmatively that time reversal can provide potential gains for certain learning problems~\cite{schuster2022learning}.  If we further consider systems in nature which cannot be directly time-reversed but which have well-characterized dynamics, then a quantum simulation incorporating time-reversal could still be performed.  In this context as well, time-reversal could unveil otherwise difficult-to-access properties of the simulated system.  

In the present work, we rigorously establish that there are properties of quantum many-body systems which are efficiently learnable with OTOCs, but which require exponentally many operations in the system size if we can only measure time-ordered correlators.  An interesting corollary is that any experimental protocol which reconstructs OTOCs from time-ordered measurements must be, in certain cases, exponentially inefficient.  Our proofs build upon and generalize recent work on quantum algorithmic measurements (QUALMs) and quantum learning theory~\cite{aharonov2021quantum, chen2021exponential, chen2021hierarchy, huang2021quantum}.

The formulation of our results entails a precise definition of the most general time-ordered and out-of-time-order correlators that can be measured; this may be of interest for other applications.  Our definition considers learning protocols for measuring properties of a physical system and its time evolution, leveraging the learning tree framework of~\cite{chen2021exponential}.

The remainder of the paper is organized as follows.  In Section~\ref{sec:review} we review OTOCs and their role in quantum many-body chaos.  In Section~\ref{sec:defs} we formulate the most general time-ordered experiments, and the most general out-of-time-order experiments.  In Section~\ref{sec:result} we explain our main results on the hardness of measuring certain properties of quantum many-body systems using time-ordered operations alone.  We conclude in Section~\ref{sec:discussion} with a discussion.

\vspace{-.5cm}

\section{Review of OTOCs}
\label{sec:review}

\vspace{-.1cm}

Here we briefly review OTOCs in the context of quantum chaos. We begin with the setting of classical mechanics, and for concreteness consider a phase space $(x_1,x_2,x_3, p_1, p_2, p_3)$ equipped with Hamiltonian dynamics.  If the system starts at position $x_0 = (x_1(0),x_2(0),x_3(0))$ and is evolved to time by a time $t$, then we denote its new position by $x(t,x_0)$, although we will often suppress the dependence on the initial condition.  To ascertain the sensitivity to initial conditions, we can compute the derivative of the first coordinate $x_1(t)$ with respect to, say, the first coordinate of the initial condition $x_1(0)$, giving $\frac{\partial x_1(t)}{\partial x_1(0)}$.  For chaotic systems this quantity can exhibit exponential growth in $t$, with a growth rate characterized by a so-called Lyapunov exponent.  This is the classical butterfly effect: the behavior of the system at later times is exponentially sensitive to the choice of initial conditions.

To motivate a quantum generalization, we can write $\frac{\partial x_1(t)}{\partial x_1(0)} = \{x_1(t), p_1(0)\}_{\text{PB}}$ where the right-hand side is the Poisson bracket.  The quantum version of this quantity is naturally $\frac{1}{i\hbar}[\hat{x}_1(t), \hat{p}_1(0)]$, as was suggested in the seminal work of Larkin and Ovchinnikov~\cite{larkin1969quasiclassical}.  Notice that this object is an operator; since it is convenient to have a single number which captures the exponential growth of chaos, it is natural to take the expectation value of the operator with respect to a state $\rho$ as $\frac{1}{i\hbar}\,\text{tr}(\rho\,[\hat{x}_1(t), \hat{p}_1(0)])$.  Often $\rho$ is chosen to be a thermal state, although in this case the expectation value may fluctuate around zero.  To ameliorate this, the expectation value of the \textit{square} of the commutator can be considered, namely $- \frac{1}{\hbar^2}\,\text{tr}(\rho \, [\hat{x}_1(t), \hat{p}_1(0)]^2)$.  Expanding this out, there are terms of the form $- \frac{1}{\hbar^2}\,\text{tr}(\rho\,\hat{x}_1(t)\,\hat{p}_1(0)\,\hat{x}_1(t)\,\hat{p}_1(0))$, which are indeed out-of-time-order: we start at time zero, evolve to time $t$, evolve back to time zero, and evolve back to time $t$.  It is this OTOC term that gives the quantum analog of exponential growth.

More generally, in quantum many-body systems the preferred version of the OTOC is often $\text{tr}(\rho\, [W(t), V(0)]^2)$ where $\rho$ is a thermal state and $W(0), V(0)$ are (initially) spatially local operators~\cite{kitaev2014hidden, maldacena2016bound}.  Then the OTOC measures how much $W(t)$ and $V(0)$ fail to commute in the Heisenberg picture, which for certain systems can grow exponentially; this growth is contained in the out-of-time-order term $\text{tr}(\rho\, W(t) V(0) W(t) V(0))$.  Such correlators have been extensively studied and characterized (e.g.~\cite{shenker2014black, stanford2016many, maldacena2016bound, maldacena2016remarks, hosur2016chaos, cotler2017chaos, cotler2018out, nahum2018operator, von2018operator}). 

While correlators are natural objects in quantum systems and field theories, they are often studied abstractly without acknowledgement of how they might be measured in a physical system.  The question of devising a measurement protocol to extract a particular correlator from a system of interest is particularly pressing in the case of OTOCs.  Below we will use the framework of learning theory to provide a general definition of how correlators can be obtained via quantum measurements.

\vspace{-.4cm}

\section{Time-ordered and Out-of-time-order experiments}
\label{sec:defs}

\vspace{-.1cm}

In this section we mathematically formalize how correlators are extracted from measurements of a system.  We leverage the learning tree formalism for quantum channels, developed in~\cite{aharonov2021quantum, chen2021exponential}.  Let us outline an intuitive understanding of how such experiments operate, and then render this into more precise definitions.

Suppose we have some experimental system with time evolution by a unitary $U$, which is not known to or fully characterized by the experimentalist.  The experimentalist desires to learn about $U$ by making measurements on the system as it evolves.  For a Hamiltonian system, we might have $U = e^{- i H\, \Delta t}$ for some not fully characterized $H$, where $\Delta t$ is the shortest time scale over which we can control the evolution.  So if the experimentalist wants to evolve the system by a time $k \,\Delta t$, he can simply apply $U^k$.  (Our formulation will also work if the experimentalist has continuous control over the time $t$, but this $\Delta t$ discretization will make our definitions simpler to state.)  To be explicit, we stipulate that the system in question is composed of $n$ qubits on which the unitary $U$ acts.

An experiment would operate as follows.  The experimentalist begins by preparing the system in some initial state $\rho_0$.   Thereafter, he can choose to either: (i) apply $U$;  (ii) apply some other quantum channel; or (iii) perform a partial or complete measurement, which would confer some classical information about the state of the system which he could store on a classical computer.  He can exercise these options again and again in a sequence, each time basing his decision of what to do next on the information collected thus far.  That is, the protocol for information collection can be \textit{adaptive}.  At the end of the experiment, the classical computer contains the information the experimentalist has gained by performing measurements at any stage throughout the protocol.

We will make the assumption that throughout the protocol, the state of the system is not entangled with any external ancilla system which the experimentalist can manipulate.  This choice is made to reflect contemporary experimental realities; for instance, at present, there is no way of entangling a sample of graphene to an external quantum computer.  In the Discussion, we will comment further on the possibility of ancilla-assisted protocols.

Next we turn to formalizing the notion of an experiment explained above.  First we note that in our setting, the most general operation the experimentalist can perform on a quantum state is a POVM measurement~\cite{nielsen2002quantum}.  That is, consider a collection of operators $\{F_i\}_i$ on $n$ qubits, satisfying $\sum_i F_i^\dagger F_i = \mathds{1}$.  Then the POVM measurement with respect to these operators maps
\begin{equation}
\rho \,\,\longmapsto\,\, \frac{F_i \rho F_i^\dagger}{\text{tr}(F_i \rho F_i^\dagger)}\quad \text{with probability} \,\,\, \text{tr}(F_i \rho F_i^\dagger)\,.
\end{equation}
Here the experimentalists' apparatuses would register that he had measured the $i$th outcome.  This generalizes the notion of a projective measurement from elementary quantum mechanics.  Note that a special case of a POVM measurement is simply the application of a unitary $V$; if the POVM is the singleton set $\{V\}$ which clearly satisfies $V^\dagger V = \mathds{1}$, then `measuring' $\rho$ yields $V \rho V^\dagger$ with probability one.

The reason that POVM measurements are so general is encapsulated in the following fact: any composition of quantum channels and POVM measurements can be captured by a \textit{single} new POVM measurement.  That is, suppose we have a quantum state and suscept it to a sequence of POVM measurements and quantum channels; then the result of this is the same as having applied some \"{u}ber-POVM measurement.

With the above in mind, we can conceive of an experimental protocol as occurring in a sequence of rounds.  The protocol is as follows:
\begin{itemize}
    \item Initialize $\rho_0$.
    \item Apply $U$, measure the state using a POVM $\{F_i\}_i$.  Suppose the outcome is $i = q$\,; then store this in the classical memory.  The output is the state $\rho_q := \frac{F_q U \rho_0 U^\dagger F_q^\dagger}{\text{tr}(F_q U \rho_0 U^\dagger F_q^\dagger)}$.
    \item Apply $U$, measure the state using a POVM $\{F_{q,i}\}_i$, which can be contingent on the previous measurement outcome $q$.  Suppose the new outcome is $i = r$\,; then store this in the classical memory.  The output is the state $\rho_{q,r} := \frac{F_{q,r} U \rho_q U^\dagger F_{q,r}^\dagger}{\text{tr}(F_{q,r} U \rho_q U^\dagger F_{q,r}^\dagger)}$.
    \item Apply $U$, measure the state using a POVM $\{F_{q,r,i}\}_i$, which can be contingent on the previous measurement outcomes $q,r$.  Suppose the new outcome is $i = s$\,; then store this in the classical memory.  The output is the state $\rho_{q,r,s} := \frac{F_{q,r,s} U \rho_{q,r} U^\dagger F_{q,r,s}^\dagger}{\text{tr}(F_{q,r,s} U \rho_{q,r} U^\dagger F_{q,r,s}^\dagger)}$.
    \item Repeat this kind of adaptive POVM measurement procedure for $T$ total rounds.
\end{itemize}
\begin{figure}[t!]
    \centering
    \includegraphics[width=0.4\textwidth]{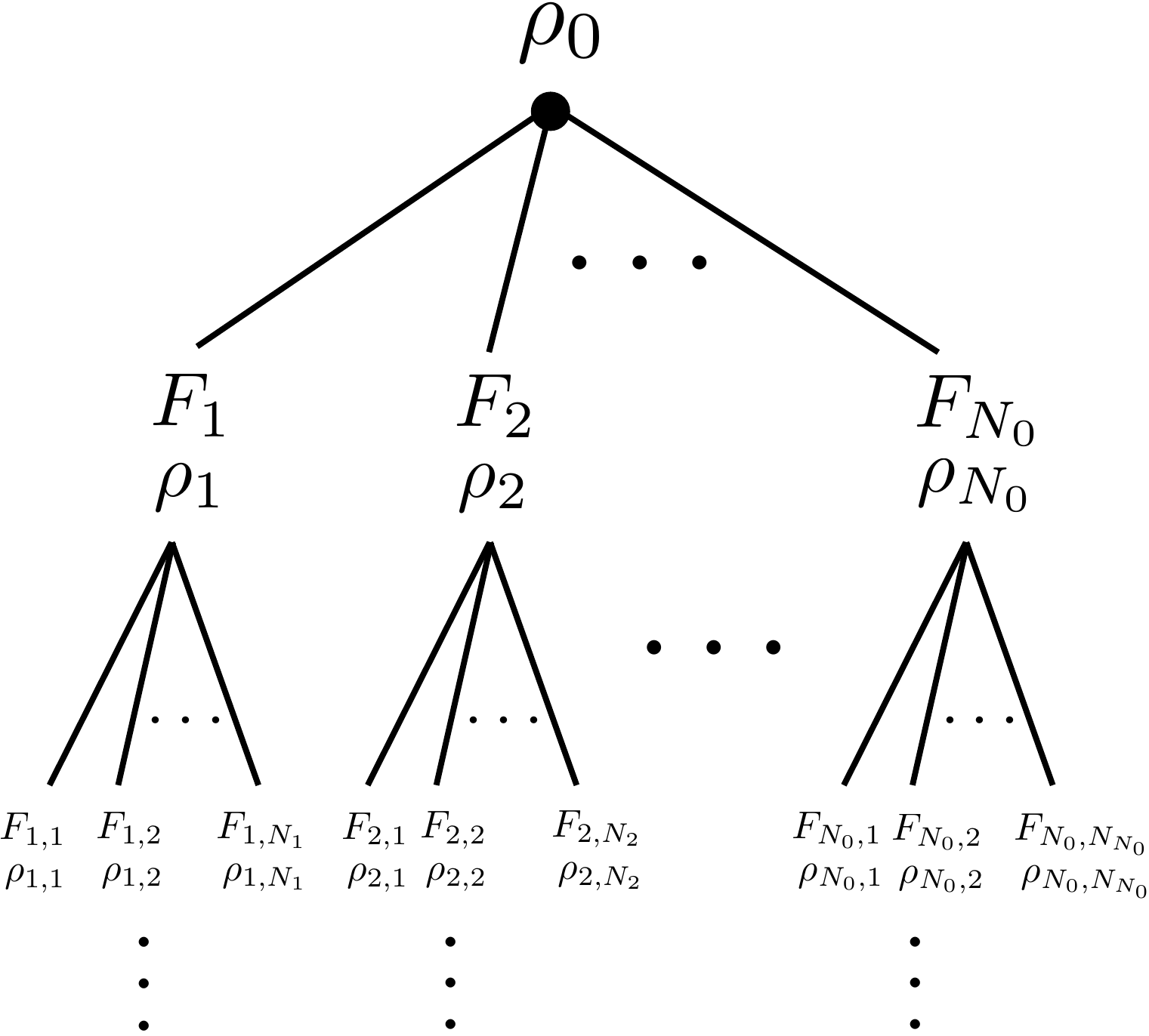}
    \centering
    \caption{\textit{Depiction of a learning tree $\mathcal{T}$.} We begin with a state $\rho_0$ at the root of the tree, and perform successive, adaptive POVM measurements.  A root-to-leaf path through the tree corresponds to a sequence of POVM measurement outcomes.}
    \label{fig:Tree1}
\end{figure}
Here, the (adaptive) sequence of POVMs only `knows' about $U$ via the sequence of measurement outcomes.

Several further comments are in order.  First, note that a POVM measurement is not applied between the initial preparation of $\rho_0$ and the initial application $U$; this would be superfluous since it is equivalent to having prepared a different initial state.  Second, this protocol is clearly adaptive, since the choice of each POVM measurement can be contingent on all previous measurement outcomes.  Indeed, the learning tree specifies an adaptive strategy since it prescribes how the experimentalist makes his adaptive choices.  Third, observe that if the experimentalist wanted to apply $k$ $U$'s in a row, i.e.~$U^k$, he could simply choose for the POVMs to be $\{\mathds{1}\}$ for $k$ rounds in a row.  Finally, we observe that the protocol outlined above is only directly capturing \textit{time-ordered} correlations, since the experimentalist can apply $U$ but not $U^\dagger$.

This class of protocols can be fruitfully organized into a tree, as per Figure~\ref{fig:Tree1}.  We start at the root of the tree (i.e.~the top-most vertex), and traverse down the tree by successively performing POVM measurements in an adaptive fashion.  We see, then, that a particular instantiation of the protocol is a root-to-leaf (i.e.~top-to-bottom) path through the tree.  A tree of depth $T$ corresponds to applying $U$ a total of $T$ times, i.e.~once per round.  The classical information that the experimentalist obtains is the sequence of POVM measurement outcomes, which corresponds to a root-to-leaf path through the tree. A path is labelled by a sequence of vertices $v_0, v_1, ..., v_T = \ell$, or more simply by $\ell$ since the leaf node specifies the entire root-to-leaf path.

Let us denote such a learning tree by $\mathcal{T}$.  It represents a specification of an adaptive experimental protocol that an experimentalist can perform. We provide more formal details in the Appendix.  Now if $v_0, v_1, ..., v_T = \ell$ is a root-to-leaf path through $\mathcal{T}$, then the probability of taking that path is
\begin{equation}
p^U(\{v_t\}) := \prod_{t=1}^T \text{tr}(F_{v_t} U \rho_{v_{t-1}} U^\dagger F_{v_t}^\dagger)\,,
\end{equation}
which can be more conveniently notated by $p^U(\ell)$.  In other words, this is the probability of the experimentalist obtaining the sequence of measurement outcomes given by the root-to-leaf path through the tree terminating in $\ell$.  The way that information is extracted from an experiment is via a function $G(\ell)$ which maps the sequence of measurement outcomes to the value of some desired quantity, e.g.~a time-ordered correlator.  The empirical expectation value of $G(\ell)$ is then
$\hat{G} := \mathbb{E}_{p^U(\ell)}[G(\ell)] = \sum_\ell p^U(\ell) \, G(\ell)$.

The above motivates the following definition of a time-ordered experiment for learning properties of $U$, which we further detail in the Appendix:
\begin{definition}[Time-ordered experiment]
A time-ordered experiment is any learning tree protocol $\mathcal{T}$ which queries $U$.
\end{definition}
\noindent The definition of an out-of-time-order experiment follows in a similar fashion:
\begin{definition}[Out-of-time-order experiment]
An out-of-time-order experiment is any learning tree protocol $\mathcal{T}'$ which queries both $U$ and $U^\dagger$, where the choice of which one is to be queried in each round can be determined adaptively.
\end{definition}
\noindent Now suppose we want to measure an OTOC such as $\text{tr}(\rho_0 U^\dagger W U V U^\dagger W U V )$.  Clearly this is most accessible with an out-of-time-order experiment.  However, we emphasize that we can obtain this OTOC using the data of a time-ordered experiment, although we might require many more rounds of the experiment to obtain the answer to within the desired precision.

Indeed, our goal in next section is to establish that if we do not fully know $U$ (or $U^\dagger$), then there are certain OTOCs which are readily and efficiently attained by an out-of-time-order experiment, but which require exponentially many operations if the experiments are time-ordered.

\vspace{-.2cm}

\section{Information-theoretic hardness of OTOCs}
\label{sec:result}

\vspace{-.1cm}

In this section we explain our main result, namely that for quantum many-body systems with partially unknown dynamics, there can be OTOCs which are easy to measure with out-of-time-order experiments but which are exponentially hard to measure with only time-ordered experiments.  Said differently, any experimental protocol that reconstructs OTOCs from only time-ordered experiments must in some cases be exponentially inefficient.  In this manner, our results elucidate fundamental differences between OTOCs and time-ordered correlators.

Our proof strategy is to construct an explicit example for which measuring an OTOC to within constant error has an exponential disparity between the time-ordered and out-of-time-ordered experimental settings. Concretely, consider again an $n$ qubit system, here for $n$ even, equipped with a partially uncharacterized unitary.  Suppose that it is either: (i) a fixed, Haar-random unitary $U$ on $n$ qubits, or (ii) a product $U_1 \otimes U_2$ of two fixed, Haar-random unitaries $U_1, U_2$, each on $n/2$ qubits.  Here $U_1$ is to act on the first $n/2$ qubits, and $U_2$ is to act on the remaining $n/2$ qubits.  The experimentalist will not know which of these two possibilities (i) or (ii) is the case, and is tasked with performing an experimental protocol to determine which one is instantiated.

The two possibilities are physically rather different.  In (i) all of the qubits interact with one another, whereas in (ii) only blocks of half of the qubits mutually interact.  This suggests that if the experimentalist can perform an out-of-time-order experiment, it is quite easy to distinguish between (i) and (ii) by measuring a single OTOC.  This works in the following way.  The experimentalist prepares the system in the all zero state $|0\rangle^{\otimes n}$, and then applies the unknown unitary.  Thereafter, the experimentalist applies $\sigma_x$ on the first qubit to flip it, followed by applying the inverse of the unknown unitary.  Then the experimentalist checks if the second block of $n/2$ qubits is again in the all zero state.  This corresponds to measuring the OTOC
\begin{equation}
\textsf{OTOC}(V) = \text{tr}\!\left(\mathds{1}_{\frac{n}{2}} \! \otimes |0\rangle \langle 0|^{\otimes \frac{n}{2}}\!\left\{V^\dagger \sigma_x^1 V |0\rangle \langle 0|^{\otimes n} V^\dagger \sigma_x^1 V\right\}\right),
\end{equation}
where $V$ is a placeholder for the unknown unitary.  In case (i), the final output state will be complicated, having little overlap with the all zero state.  Indeed, on average we have
\begin{equation}
\label{E:expectbound1}
\mathbb{E}_{U \sim \text{Haar}(2^n)}\!\left[\textsf{OTOC}(U)\right] = \frac{2^{\frac{3n}{2}} - 1}{2^{2n} - 1} \leq O(1/2^{n/2})\,.
\end{equation}
However, in case (ii) the $\sigma_x^1$ operator still allows $U_2$ to cancel with $U_2^\dagger$, and so the second block of $n/2$ qubits ends up precisely in the all zero state.  In terms of the OTOC correlator, we have
\begin{equation}
\label{E:noexpectbound1}
\textsf{OTOC}(U_1 \otimes U_2) = 1 \quad \text{for all }\,\,U_1, U_2\,.
\end{equation}
These results are illustrated in Figure~\ref{fig:UnitaryFig1}.

\begin{figure}[t!]
    \centering
    \includegraphics[width=0.45\textwidth]{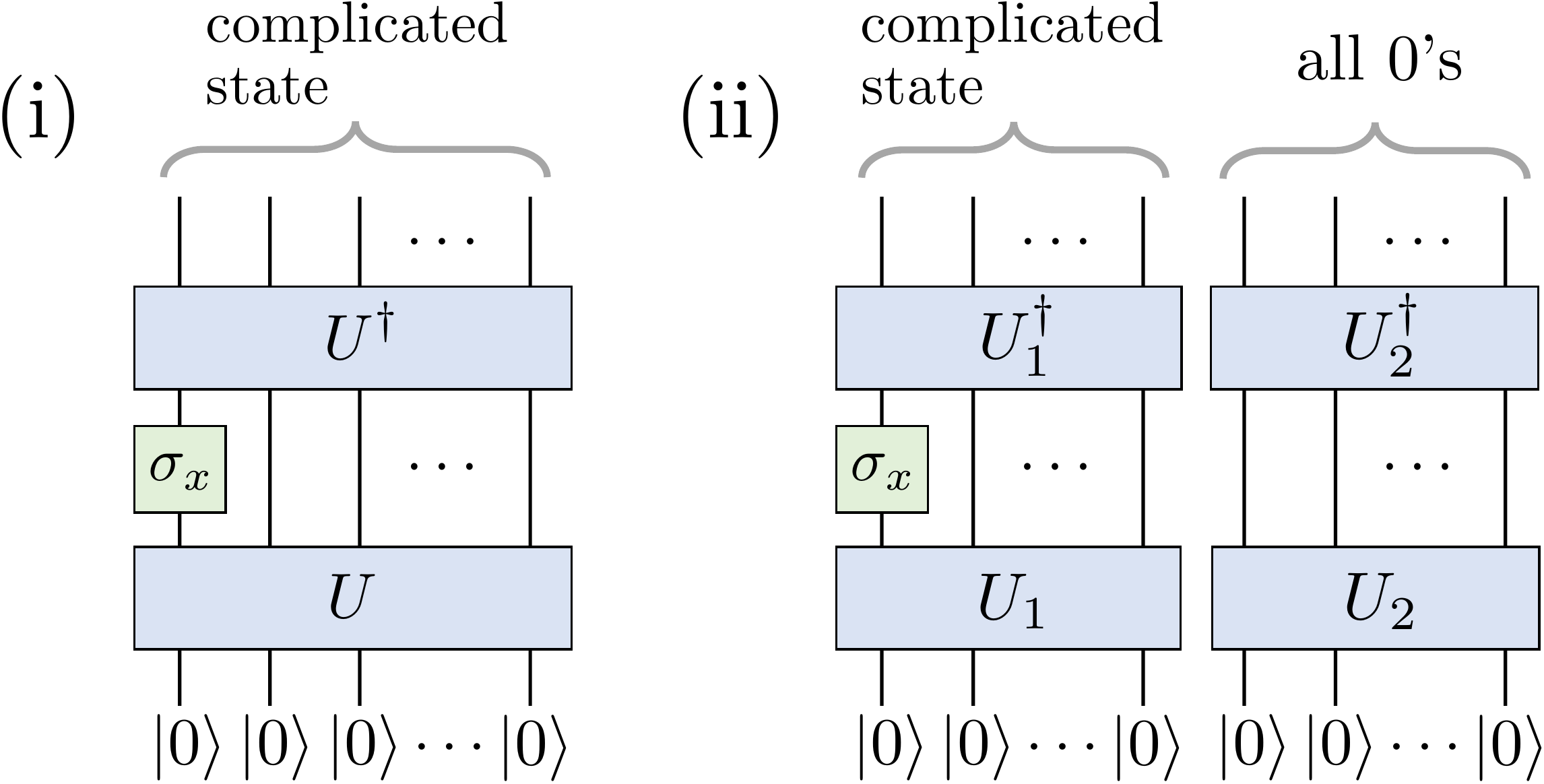}
    \centering
    \caption{\textit{Schematic of out-of-time-order experiment.}  (i) In the first case, applying $U$, then $\sigma_x^1$, and then $U^\dagger$ results in a complicated state. (ii) In the second case, applying $U_1 \otimes U_2$, then $\sigma_x^1$, and then $U_1^\dagger \otimes U_2^\dagger$ leads to a product of two pure states, each on $n/2$ qubits; the first is complicated, the second is the all zero state.}
    \label{fig:UnitaryFig1}
\end{figure}

More formally, these results have the following consequence:
\begin{theorem}[Easiness of task with out-of-time-order experiment] If the experimentalist can perform an out-of-time-order experiment, then with probability exponentially close to one the cases $\text{\rm (i)}$ and $\text{\rm (ii)}$ can be distinguished using only a single application of the unknown unitary, and a single application of its inverse.
\end{theorem}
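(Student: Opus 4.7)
My plan is to execute the one-shot out-of-time-order protocol exactly as sketched in the text: prepare $|0\rangle^{\otimes n}$, apply the unknown unitary $V$, apply $\sigma_x$ on the first qubit, apply $V^\dagger$, and measure the last $n/2$ qubits in the computational basis. The decision rule is to output ``case~(ii)'' if the measurement returns $|0\rangle^{\otimes n/2}$ on the second block and ``case~(i)'' otherwise. By construction, the probability of the all-zeros outcome on the second block equals $\textsf{OTOC}(V)$, and this protocol uses $V$ once and $V^\dagger$ once, matching the query budget of the theorem.

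For case~(ii), I would verify the deterministic factorization by tracking the state through each step. Under $V = U_1 \otimes U_2$, the state after $V$ factors as $(U_1|0\rangle^{\otimes n/2}) \otimes (U_2|0\rangle^{\otimes n/2})$. Applying $\sigma_x^1$ touches only the first block, preserving the product structure, and $V^\dagger = U_1^\dagger \otimes U_2^\dagger$ restores the second block to $|0\rangle^{\otimes n/2}$ exactly. Hence the measurement returns $|0\rangle^{\otimes n/2}$ with probability~$1$, in agreement with $\textsf{OTOC}(U_1 \otimes U_2) = 1$ from~\eqref{E:noexpectbound1}, and the protocol outputs ``case~(ii)'' deterministically.

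For case~(i), the goal is to show that with probability exponentially close to one over $U \sim \text{Haar}(2^n)$, the all-zeros outcome is also exponentially unlikely. The key input is the expectation bound $\mathbb{E}_{U \sim \text{Haar}(2^n)}[\textsf{OTOC}(U)] \leq O(2^{-n/2})$ from~\eqref{E:expectbound1}. Since $\textsf{OTOC}(V)$ is a probability and hence non-negative, Markov's inequality gives, for any $t > 0$,
\begin{equation}
\mathbb{P}_{U \sim \text{Haar}(2^n)}\!\left[\textsf{OTOC}(U) \geq t\right] \leq \frac{O(2^{-n/2})}{t}.
\end{equation}
Choosing $t = 2^{-n/4}$ shows that with probability at least $1 - O(2^{-n/4})$ over Haar the conditional measurement failure probability is at most $2^{-n/4}$; a union bound with the single-shot measurement randomness then gives overall success probability $1 - O(2^{-n/4})$, which is exponentially close to one.

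The only subtlety I foresee is whether Markov's inequality gives a sharp enough rate. It is already sufficient for the stated ``exponentially close to one,'' but if one wanted the optimal $2^{-\Omega(n)}$ decay, one could instead estimate $\mathbb{E}_U[\textsf{OTOC}(U)^2]$ via the Weingarten calculus and apply Chebyshev's inequality, or invoke L\'evy's concentration of measure on the unitary group. Both refinements rely only on techniques already present in the paper, so I do not expect the concentration step to be the main obstacle; the real content of the argument is the deterministic collapse of the second block in case~(ii).
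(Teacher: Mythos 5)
Your proof is correct and takes essentially the same route as the paper: both rest on the Haar average $\mathbb{E}_U[\textsf{OTOC}(U)]=O(2^{-n/2})$, Markov's inequality, and the deterministic value $\textsf{OTOC}(U_1\otimes U_2)=1$. You are actually more careful than the paper in one respect: you make the single-shot decision rule explicit and, crucially, you set the Markov threshold to $t=2^{-n/4}$ so that \emph{both} the Haar tail probability and the residual one-shot measurement error are exponentially small, then union-bound them. The paper's phrasing invokes Markov at a ``small constant $\varepsilon$,'' which, if read literally, would leave a constant-order conditional error from the measurement randomness; your parameterization closes that gap cleanly.

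One small simplification worth noting: Markov's inequality is not actually needed here. Since $\textsf{OTOC}(U)$ is itself the probability of observing the all-zeros outcome on the second block, the total failure probability in case~(i), averaged jointly over the Haar-random $U$ and the measurement randomness, is exactly $\mathbb{E}_{U\sim\text{Haar}(2^n)}[\textsf{OTOC}(U)]\le O(2^{-n/2})$. Applying Markov and then union-bounding degrades this to $O(2^{-n/4})$; the direct averaging gives the sharper rate with less work. Both are ``exponentially close to one,'' so your conclusion stands, but this shortcut subsumes your suggested Chebyshev/L\'evy refinements as well.
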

\begin{proof}
In case (i), the probability that the OTOC is less than or equal to a small constant $\varepsilon$ is bounded by Markov's inequality, namely
\begin{align}
\text{Prob}[\textsf{OTOC}(U) \leq \varepsilon] &\geq 1 - \frac{\mathbb{E}_{U \sim \text{Haar}(2^n)}\!\left[\textsf{OTOC}(U)\right]}{\varepsilon} \nonumber \\
&\geq 1 - O(1/(\varepsilon\,2^{\frac{n}{2}}))\,,
\end{align}
where we have used~\eqref{E:expectbound1}.  In case (ii), the probability that $\textsf{OTOC}(U_1 \otimes U_2)$ is greater than $\varepsilon$ is one, on account of~\eqref{E:noexpectbound1}.  Thus the two possibilities can be distinguished with probability exponentially close to unity, and the protocol only requires a single query of the unknown unitary and a single query of its inverse.
\end{proof}
This result lies in contrast to the following, more difficult theorem:
\begin{theorem}[Exponential hardness of task with time-ordered experiment]
Any time-ordered experiment which can distinguish cases $\text{\rm (i)}$ and $\text{\rm (ii)}$ requires at least $\Omega(2^{n/4})$ queries of the unknown unitary, and so is exponentially inefficient.
\end{theorem}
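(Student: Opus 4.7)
The plan is to carry out a two-point Le Cam-style argument on the learning tree, along the lines of \cite{chen2021exponential, aharonov2021quantum}. Define the two leaf distributions
\[
\bar p^{(i)}(\ell) = \mathbb{E}_{U \sim \text{Haar}(2^n)}[p^U(\ell)], \qquad \bar p^{(ii)}(\ell) = \mathbb{E}_{U_1,U_2 \sim \text{Haar}(2^{n/2})}[p^{U_1 \otimes U_2}(\ell)],
\]
where $p^U(\ell)$ is the path probability through any fixed time-ordered tree $\mathcal{T}$ of depth $T$. A standard consequence of the data-processing inequality is that any experiment which distinguishes cases (i) and (ii) with constant advantage must satisfy $\tvd(\bar p^{(i)}, \bar p^{(ii)}) = \Omega(1)$, so it suffices to show $\tvd(\bar p^{(i)}, \bar p^{(ii)}) = O(T^2 / 2^{n/2})$ uniformly over all time-ordered trees. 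Solving $T^2/2^{n/2} = \Omega(1)$ then forces $T = \Omega(2^{n/4})$.

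To compute each averaged probability, I would write the subnormalized state along the root-to-leaf path as $M_\ell(U)\ket{\psi_0}$, with $M_\ell(U) = F_{v_T}\, U\, F_{v_{T-1}}\, U \cdots F_{v_1}\, U$, so that $p^U(\ell) = \bra{\psi_0} M_\ell(U)^\dagger M_\ell(U) \ket{\psi_0}$ is a polynomial of degree $T$ in $U$ and degree $T$ in $U^\dagger$. The Haar average is then controlled by the $T$-th moment operator $\mathbb{E}[U^{\otimes T} \otimes (U^*)^{\otimes T}] = \sum_{\sigma,\tau \in S_T} \Wg(\sigma\tau^{-1}, D)\, |\sigma\rangle\!\langle\tau|$ in the Collins--Matsumoto form (Theorems 4.10 and 4.11 of~\cite{collins2017weingarten}), with $D = 2^n$ in case (i) and with a tensor product of two such sums at dimension $d = 2^{n/2}$ in case (ii).

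The crux of the argument is to compare these two moment operators term by term. Using the asymptotic expansion $\Wg(\pi, D) = D^{-T - |\pi|}(1 + O(T^2/D^2))$, the identity-permutation contribution ($\sigma = \tau = \mathrm{id}$) reproduces exactly the outcome probabilities that would arise if $U$ were replaced at each round by the completely depolarizing channel on $n$ qubits; crucially this depolarized protocol is \emph{insensitive} to the tensor-product structure of $U_1 \otimes U_2$, so the identity terms in (i) and (ii) cancel exactly. Every non-identity pair $(\sigma,\tau)$ carries an extra suppression of at least $1/d^{2}$, and after summing over leaves $\ell$ --- using POVM completeness $\sum_{v_t} F_{v_t}^\dagger F_{v_t} = \Id$ at each level to collapse the tree into a single trace --- I expect the residual contribution to be bounded by $O(T^2/2^{n/2})$, paralleling the tree-pruning estimates in Lemmas 6, 8, and 10 of~\cite{aharonov2021quantum}.

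The main obstacle is the Weingarten bookkeeping in the non-identity sector. The time-ordered structure places every $U$ on one side of $M_\ell$ and every $U^\dagger$ on the other, so the ``natural'' within-round cancellations $U U^\dagger = \Id$ that would make cases (i) and (ii) indistinguishable with one query are simply absent; what remains are cross-round pairings whose block-diagonal versus full-dimensional dependence on $\Wg$ must be tracked carefully to extract the $2^{-n/2}$ per-loop suppression while simultaneously handling the adaptive choice of POVMs. Once this is done, the bound $\tvd(\bar p^{(i)}, \bar p^{(ii)}) = O(T^2/2^{n/2})$ immediately yields the claimed $\Omega(2^{n/4})$ query lower bound.
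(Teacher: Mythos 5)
Your high-level plan matches the paper's: Le Cam's two-point method on the leaf distribution, reduction to the trace distance $\sum_\ell |\bar p^{(i)}(\ell) - \bar p^{(ii)}(\ell)|$, and Weingarten calculus on $\mathbb{E}[U^{\otimes T} \otimes (U^*)^{\otimes T}]$. The target bound $O(T^2/2^{n/2})$ is also in the right ballpark --- it is indeed the tightest constraint in the paper's analysis (arising from the non-identity $\sigma$ sector in case (ii)), and it does produce $T = \Omega(2^{n/4})$. But there are two genuine gaps.

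First, your claim that the identity-permutation ($\sigma = \tau = \mathrm{id}$) contributions ``cancel exactly'' between (i) and (ii) is false. The $\sigma=\tau=\mathrm{id}$ term in case (i) carries a factor $\Wg^U(\mathds{1}, 2^n)$, while in case (ii) it carries $\Wg^U(\mathds{1}, 2^{n/2})^2$, and these are not equal; nor does either equal $2^{-nT}$ exactly. The paper avoids this pitfall by inserting the maximally depolarizing channel $\mathcal{D}$ (with $p^{\mathcal{D}}(\ell) = d^{-T}$) as an intermediary and using the triangle inequality, then bounding $|\Wg^U(\mathds{1},d) - d^{-T}|$ via an asymptotic Weingarten estimate (Corollary 1, from Theorem 3.2 of Collins--Matsumoto). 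If you insist on comparing (i) and (ii) directly, you still need an estimate of this kind; you cannot rely on exact cancellation.

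Second, and more seriously, the entire non-identity sector --- which you yourself flag as ``the main obstacle'' --- is left unproved, and this is where essentially all of the technical content lives. The paper handles it in two pieces, and neither is a straightforward application of $\Wg(\pi,D) = D^{-T-|\pi|}(1+O(T^2/D^2))$. For $\sigma \neq \mathds{1}$, $\tau = \mathds{1}$ one needs H\"older's inequality and $\|\sigma\|_\infty = 1$ together with $\sum_\sigma |\Wg^U(\sigma,d)| = (d-T)!/d!$ (Lemma 6 of Aharonov--Cotler--Qi). For $\tau \neq \mathds{1}$ the argument is substantially harder: the permuted contraction of rank-one POVM elements must be decomposed into cycles, bounded cycle-by-cycle via $ab \le \tfrac12(a^2+b^2)$, and then --- crucially --- the sum over leaves must be performed from the leaves back toward the root because the POVMs are chosen \emph{adaptively}. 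This leads to the estimate $\sum_{\ell} R^\ell_{1,i_1}\cdots R^\ell_{\#(\tau^{-1}),i_{\#(\tau^{-1})}} \le d^{T - \lfloor L(\tau^{-1})/2 \rfloor}$ in terms of the longest cycle length $L(\tau^{-1})$, followed by a combinatorial count of permutations by longest cycle length. Your phrase ``collapse the tree into a single trace using POVM completeness'' obscures precisely the adaptivity issue that forces this ordered, cycle-structured summation. Without these steps the proof does not close.
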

\noindent An equivalent rephrasing is our promised result:
\begin{rephrase}[Exponential hardness OTOCs with time-ordered experiments]
Any time-ordered experimental protocol to determine an OTOC to within constant error must in certain instances require accessing the time evolution exponentially many times.  However, in some of these instances, an out-of-time-order experiment can determine the OTOC to within constant error by accessing the time evolution only a constant number of times.
\end{rephrase}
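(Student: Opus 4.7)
The plan is to prove the contrapositive: show that any time-ordered adaptive protocol distinguishing cases (i) and (ii) with constant success probability requires $\Omega(2^{n/4})$ queries. This suffices because~\eqref{E:expectbound1} and~\eqref{E:noexpectbound1} together imply (via Markov, as in Theorem~1) that a constant-accuracy OTOC estimator must distinguish~(i) from~(ii) with high probability over the Haar-random instance. So for every learning tree $\mathcal{T}$ of depth $T$, I would consider the mixture leaf distributions
\begin{equation}
p_1(\ell) := \mathbb{E}_{U \sim \mathrm{Haar}(2^n)}\!\left[p^{U}(\ell)\right], \quad p_2(\ell) := \mathbb{E}_{U_1,U_2}\!\left[p^{U_1 \otimes U_2}(\ell)\right],
\end{equation}
and invoke Le Cam's two-point lemma, reducing the task to showing $\|p_1 - p_2\|_{\mathrm{TV}} = O(T/2^{n/4})$ uniformly in $\mathcal{T}$.

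To obtain this TV bound, I would adapt the chi-squared methodology developed in~\cite{aharonov2021quantum, chen2021exponential}. Fix a reference leaf distribution $q(\ell)$---for example the one induced by replacing the unknown $U$ by the maximally depolarizing channel $\rho \mapsto \Id/2^n$---and apply Cauchy--Schwarz to obtain
\begin{equation}
\|p_1 - p_2\|_{\mathrm{TV}}^2 \leq \frac{1}{4}\sum_\ell \frac{(p_1(\ell) - p_2(\ell))^2}{q(\ell)}.
\end{equation}
The right-hand side can be recast as a sum of products of Haar moments of the form $\mathbb{E}\bigl[U^{\otimes 2} A\, (U^\dagger)^{\otimes 2}\bigr]$, where the operators $A$ are built adaptively from POVM elements. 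Using Weingarten calculus~\cite{collins2017weingarten}, the identity-permutation contributions cancel because the Haar average $\mathbb{E}[U \rho U^\dagger]$ is the maximally mixed state in both cases; the leading surviving contribution comes from the swap permutation, whose expectation differs between cases~(i) and~(ii) by $O(2^{-n/2})$, reflecting the bipartite factorization in the $U_1 \otimes U_2$ case.

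The main obstacle I anticipate is turning this per-query $2^{-n/2}$ suppression into a global $O(T^2/2^{n/2})$ chi-squared bound, uniformly over all adaptive trees. Because each POVM $\{F_{v_t,i}\}$ depends on the full measurement history, the intermediate Weingarten operators are themselves $\ell$-dependent and cannot be treated as fixed across levels. I would handle this by introducing a likelihood-ratio potential function on the tree and showing, by induction on tree depth, that it accumulates at most $O(T^2/2^{n/2})$ deviation across all $T$ levels, exploiting the POVM normalization $\sum_i F_{v,i}^\dagger F_{v,i} = \Id$ to control intermediate operator norms at each step. A secondary subtlety is verifying that higher-order Weingarten contributions (beyond identity and swap) are suppressed by additional inverse powers of $2^n$ and hence do not spoil the inductive step when $T$ grows as large as $2^{n/4}$.
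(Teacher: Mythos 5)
Your high-level setup (Le Cam reduction, Weingarten calculus, $O(2^{-n/2})$ suppression from bipartite factorization) agrees with the paper, but your route to bounding the total-variation distance is genuinely different from the paper's and, more importantly, leaves the hard part unresolved.

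The paper does \emph{not} bound a $\chi^2$-type quantity $\sum_\ell (p_1(\ell)-p_2(\ell))^2/q(\ell)$. Instead it applies a plain triangle inequality through the depolarizing reference,
\begin{equation}
\sum_\ell \bigl|p_1(\ell)-p_2(\ell)\bigr| \le \sum_\ell \bigl|p^{\mathcal{D}}(\ell)-p_1(\ell)\bigr| + \sum_\ell \bigl|p^{\mathcal{D}}(\ell)-p_2(\ell)\bigr|\,,
\end{equation}
and then bounds each summand separately in $\ell_1$ (Propositions~1 and~2). This is a strictly easier task than controlling the divided-difference you propose: the $\chi^2$-style quantity can be much larger than TV even when TV is small, and making it small requires genuine cancellation in $p_1-p_2$ leaf by leaf, which is a stronger statement than either ensemble being close to the depolarizing reference. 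Nothing in your sketch shows that cancellation actually occurs.

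The more substantive gap is in the permutation analysis. You claim the ``identity-permutation contributions cancel'' and that ``the leading surviving contribution comes from the swap permutation.'' That picture is accurate for a single $k=2$ Haar moment, but a depth-$T$ learning tree produces $T$-th Haar moments, so both $\sigma$ and $\tau$ range over all of $S_T$, and one must control every non-identity permutation, not just the transposition. The paper handles this with a cycle-decomposition argument: it shows the $\sigma\neq\Id$, $\tau=\Id$ terms are suppressed via Lemma~6 of~\cite{aharonov2021quantum}, and for $\tau\neq\Id$ it establishes (via H\"older, rank-one POVM structure, the AM--GM trick on cycle contractions, and the telescoping leaf-to-root summation using $\sum_v F_v^\dagger F_v = \Id$) that each term is bounded by $d^{T-\lfloor L(\tau^{-1})/2\rfloor}$, where $L(\tau^{-1})$ is the longest cycle length, and then sums over cycle types using $N(T,L)<T^L$. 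Your proposal replaces all of this with an unspecified ``likelihood-ratio potential function'' and ``induction on tree depth.'' You correctly anticipate that the adaptivity of the POVMs prevents treating levels independently, but the proposed fix is not worked out and there is no reason offered that a level-by-level potential would reproduce the crucial $d^{-\lfloor L/2\rfloor}$ suppression tied to cycle structure, which is what ultimately yields the $T\lesssim d^{1/4}$ threshold. As it stands, the central technical content of the theorem is absent from the proposal.
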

\noindent The theorems should be regarded as quantifying a form of information-theoretic hardness, since they bound the number of applications of the unknown unitary that we need to succeed in the time-ordered setting.  
While the proof of hardness is given in the Appendix, we sketch its high-level strategy here.

The idea, coming from previous work~\cite{aharonov2021quantum, chen2021exponential}, is to upper bound the sum
\begin{equation}
\label{E:tobound1}
\sum_{\ell \,\in\, \text{leaf}(\mathcal{T})}\left|\mathop{\mathbb{E}}_{U \sim \text{Haar}(2^n)}[p^U(\ell)] - \mathop{\mathbb{E}}_{U_1, U_2 \sim \text{Haar}(2^{n/2})}[p^{U_1 \otimes U_2}(\ell)]\right|
\end{equation}
for any time-ordered experiment $\mathcal{T}$.  We would like to show that this quantity is $o(1)$ if the number of applications of the unknown unitary is less than $o(2^{n/4})$.    This would imply that the probability of distinguishing (i) and (ii) can only reach a constant value (i.e.~one that is not suppressed in $n$) if we apply the oracle exponentially many times in $n$.  Intuitively, upper bounding~\eqref{E:tobound1} by a small number means that the probability distribution over measurement outcomes looks extremely similar regardless of whether case (i) or case (ii) is instantiated; this means that the two cases cannot be distinguished.

Operationally, we show that $\mathop{\mathbb{E}}_{U \sim \text{Haar}(2^n)}[p^U(\ell)]$ and $\mathop{\mathbb{E}}_{U_1, U_2 \sim \text{Haar}(2^{n/2})}[p^{U_1 \otimes U_2}(\ell)]$ are each close to the uniform distribution $\frac{1}{2^{nT}}$, and hence close to one another via the triangle inequality.  To establish closeness to the uniform distribution, we rewrite the Haar averages in terms of sums of correlators involving permutation operators via the Weingarten calculus.  In each case, one of the correlators is exponentially close to $\frac{1}{2^{nT}}$, and intricate algebraic manipulations establish that the remaining correlators are further suppressed by factors of the Hilbert space dimension. The required technical tools including Haar integration, Weingarten functions, and the learning tree formalism are provided in the Appendix A. Our main proofs are presented in Appendix B.

\vspace{-.2cm}

\section{Discussion}
\label{sec:discussion}

\vspace{-.1cm}

In this paper we have given a precise framework for defining and analyzing time-ordered versus out-of-time-order experiments, and established that the latter have an exponential advantage over the former for measuring certain OTOCs.  Our methods advance recent developments in quantum learning theory~\cite{aharonov2021quantum, huang2021information, chen2021exponential, chen2021hierarchy, huang2021quantum}, and are also a testament to the power of its perspective.

It would be interesting to generalize our results to more realistic settings, e.g.~when the unitaries in question are not constructed from Haar-random ensembles.  There has been progress in this vein for a related class of learning problems involving states instead of unitaries, e.g.~\cite{chen2021exponential, huang2021quantum}. 

We emphasize that in this paper we have made the physically reasonable assumption that we cannot entangle our system to ancillas which could act as a quantum memory.  Indeed, in certain cases adding ancillas could equalize the power balance between time-ordered and out-of-time-order experiments for certain OTOC learning tasks.
This tradeoff and tension between quantum memory and the inability to reverse time is worthy of further investigation.

There has been previous work on the difficulty of simulating the Hermitian conjugate of a unitary $U$ given only black box access to $U$~\cite{quintino2019probabilistic, quintino2019reversing}; our approach in the present work is different, since we instead consider experiments for learning \textit{properties} of $U$.  However, our results and techniques may interface in interesting ways with this line of previous work, for instance establishing new hardness results.  We note that our Theorem 2 implies that in the worst case it is exponentially hard to construct the inverse of a unknown unitary $U$ for which one has query access; this is consistent with~\cite{quintino2019probabilistic, quintino2019reversing}.

As a conceptual coda to our results, we remark that in our own universe we do not have the ability to reverse the direction of time.  As such, there may be physically interesting features of nature, such as ones pertaining to quantum chaos, which are effectively inaccessible to us.  This is also true of experimental systems in which we cannot, in practice, reverse the direction of their time evolution.  In the latter case, we may one day be able to exercise the option of simulating that physical system on a quantum computer, and so time-reversal becomes available.  Thus the ability to control the flow of time evolution in a quantum computer may ultimately allow us to unlock hidden properties of natural systems around us.
\\ \\ \\
\noindent {\bf Acknowledgments.}\quad 
We thank Hsin-Yuan Huang for valuable discussions, and Jarrod McClean for comments on a draft of this manuscript. JC is supported by a Junior Fellowship from the Harvard Society of Fellows, the Black Hole Initiative, as well as in part by the Department of Energy under grant {DE}-{SC0007870}.  TS acknowledges support from the National Science Foundation Graduate Research Fellowship
Program under Grant No.~DGE 1752814.
$$$$

\pagebreak
\onecolumngrid
\appendix

\section{Technical Preliminaries}

Before delving into the proof of Theorem 1, we require some technical definitions and tools.  Included are more precise versions of Definition 1 and Definition 2 from the main text. 

\subsection{Notation}

We will work with an $n$ qubit Hilbert space $\mathcal{H} \simeq (\mathbb{C}^2)^{\otimes n}$, for $n$ even.  The dimension of the Hilbert space is $2^n$, which we denote by $d$.  Our main proof will involve extensive use of diagrammatic tensor network notation, reviewed in detail in~\cite{chen2021exponential}.  Our conventions for the diagrams will match those of~\cite{chen2021exponential}.

\subsection{Haar integration and Weingarten functions}

Consider the unitary group $U(d)$.  It will be convenient to use multi-index notation, wherein $I = (i_1, ..., i_k)$ and analogously for $I',J,J'$.  Letting
\begin{equation}
U_{IJ}^{\otimes k} := U_{i_1 j_1} U_{i_2 j_2} \cdots U_{i_k j_k}\,,
\end{equation}
we will be interested in computing expectation values of the form $\mathop{\mathbb{E}}_{U \sim \text{Haar}(d)}[U_{IJ}^{\otimes k} U_{J' I'}^{\dagger \, \otimes k}]$.  To write out the result of this expectation value, we denote by $S_k$ the symmetric group on $k$ elements; for $\sigma \in S_k$, we adopt the notation
\begin{equation}
\delta_{\sigma(I), I'} := \delta_{i_{\sigma(1)},i_1'} \delta_{i_{\sigma(2)},i_2'} \cdots \delta_{i_{\sigma(k)},i_k'}\,.
\end{equation}
Then we have the useful identity (see e.g.~\cite{kostenberger2021weingarten} for a review)
\begin{equation}
\mathop{\mathbb{E}}_{U \sim \text{Haar}(d)}[U_{IJ}^{\otimes k} U_{J'I'}^{\otimes k}] = \sum_{\sigma, \tau \in S_k} \delta_{\sigma(I), I'} \delta_{\tau(J), J'} \, \text{Wg}^U(\sigma \tau^{-1},d)\,,
\end{equation}
where $\text{Wg}^U(\,\cdot\,, d) : S_k \to \mathbb{R}$ is the Weingarten function.  This function can be constructed somewhat explicitly in the following way.  In a slight abuse of notation, let us also denote by $\sigma, \tau$ their representation on $\mathcal{H}^{\otimes k}$, and define
\begin{equation}
G^U(\sigma \tau^{-1},d) := \text{tr}(\sigma \tau^{-1}) = d^{\#(\sigma \tau^{-1})}
\end{equation}
where $\#(\sigma \tau^{-1})$ is the number of cycles of $\sigma \tau^{-1}$.  Viewing $G^U(\sigma \tau^{-1},d)$ as a $k! \times k!$ matrix $G_{\sigma^{-1},\tau}$, we have that $\text{Wg}^U(\sigma^{-1} \tau, d)$ is its matrix inverse.  That is,
\begin{equation}
\sum_{\tau \in S^k} \text{Wg}^U(\sigma^{-1}\tau, d) \, G^U(\tau^{-1} \pi, d) = \delta_{\sigma, \pi}\,.
\end{equation}

Having defined the Weingarten function, let us state a few useful results from the literature which we will leverage in our proofs:
\begin{CM3p2} For any $\sigma \in S_k$ and $d > \sqrt{6}\,k^{7/4}$,
\begin{equation}
\frac{1}{1 - \frac{k-1}{d}} \leq \frac{(-1)^{k - \#(\sigma)} d^{2k - \#(\sigma)} \text{\rm Wg}^U(\sigma, d)}{\prod_i \frac{(2\ell_i - 2)!}{(\ell_i - 1)! \ell_i!}} \leq \frac{1}{1 - \frac{6 k^{7/2}}{d^2}}\,,
\end{equation}
where the left-hand side inequality is valid for any $d \geq k$.  Here $\sigma \in S_k$ has cycle type $(\ell_1, \ell_2,...)$.
\end{CM3p2}
\noindent We will in fact use the following corollary of this result:
\begin{corollary}
\label{E:corr1}
$|\text{\rm Wg}^U(\mathds{1},d) - d^{-k}| \leq O(k^{7/2} d^{-(k+2)})$.
\end{corollary}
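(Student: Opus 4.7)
The plan is to specialize Theorem~3.2 of~\cite{collins2017weingarten} to the identity permutation $\sigma = \mathds{1} \in S_k$, where every factor in the central ratio collapses cleanly. Since $\mathds{1}$ has cycle type $(1,1,\ldots,1)$ with $\#(\mathds{1}) = k$ fixed points, the sign $(-1)^{k - \#(\sigma)}$ is $+1$, the prefactor $d^{2k - \#(\sigma)}$ reduces to $d^k$, and the product $\prod_i \frac{(2\ell_i - 2)!}{(\ell_i-1)!\, \ell_i!}$ reduces to $\prod_{i=1}^k \frac{0!}{0!\,1!} = 1$. Substituting, Theorem~3.2 becomes
\[
\frac{1}{1 - (k-1)/d} \;\leq\; d^k\, \text{Wg}^U(\mathds{1}, d) \;\leq\; \frac{1}{1 - 6k^{7/2}/d^2}.
\]

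Next I would extract the error bound from this two-sided inequality. For the upper side, the elementary estimate $1/(1 - x) \leq 1 + 2x$ for $x \in [0, 1/2]$, valid once $d^2 \geq 12\, k^{7/2}$, yields $d^k\, \text{Wg}^U(\mathds{1}, d) - 1 \leq 12\, k^{7/2}/d^2$. For the lower side, I simply use $1/(1 - (k-1)/d) \geq 1$ whenever $k \leq d$, so $d^k\, \text{Wg}^U(\mathds{1}, d) \geq 1$. Combining the two gives $|d^k\, \text{Wg}^U(\mathds{1}, d) - 1| \leq 12\, k^{7/2}/d^2$, and dividing through by $d^k$ produces the claimed $|\text{Wg}^U(\mathds{1}, d) - d^{-k}| \leq O(k^{7/2}\, d^{-(k+2)})$ bound.

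The argument is essentially a direct substitution followed by an elementary rational-function estimate, so there is no substantive obstacle beyond book-keeping. The only care required is to keep the regime $d > \sqrt{6}\, k^{7/4}$ in force (so that Theorem~3.2 applies) and to let the $O(\cdot)$ constant absorb both the factor of $2$ from the Taylor remainder and any slack in the condition $d^2 \geq 12\, k^{7/2}$; both are routine in the asymptotic regime of interest for the main proofs that follow.
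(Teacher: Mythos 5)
Your proof is correct and is exactly the intended derivation: the paper states the corollary immediately after quoting Theorem~3.2 of Collins--Matsumoto and omits the routine substitution you carry out. Specializing to $\sigma=\mathds{1}$ (so $\#(\sigma)=k$, sign $+1$, prefactor $d^k$, and the cycle-type product equals $1$) and then applying the elementary bound $1/(1-x)\le 1+2x$ on one side and $1/(1-(k-1)/d)\ge 1$ on the other gives precisely $|d^k\,\mathrm{Wg}^U(\mathds{1},d)-1|\le O(k^{7/2}/d^2)$, which rescales to the claimed bound.
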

\noindent Finally we state a Lemma from~\cite{aharonov2021quantum}:
\begin{ACQl6}
$\sum_{\tau \in S_k} |\text{\rm Wg}^U(\tau, d)| = \frac{(d-k)!}{d!}$.
\end{ACQl6}

\subsection{Learning tree formalism}

In the main text, we provided definitions of time-ordered and out-of-time-order experiments based on the learning tree framework in quantum learning theory~\cite{aharonov2021quantum, chen2021exponential}.  It is useful to formalize these more precisely; our definitions below are closely based off of Definition 6.1 of~\cite{chen2021exponential}.
\begin{definition}[Tree representation for learning a collection of channels without a quantum memory] Let $S = \{\mathcal{C}_i\}_i$ be a set of quantum channels on states on $\mathcal{H}$.  A quantum learning algorithm without memory can be cast as a rooted tree $\mathcal{T}$ of depth $T$ where each vertex encodes all of the classical measurement outcomes that have been obtained by the algorithm up until then.  The tree $\mathcal{T}$ satisfies the following properties:
\begin{enumerate}
    \item Each note $u$ has an associated $n$-qubit unnormalized state $\rho^S(u)$ corresponding to the current state of the system.
    \item At the root $r$ of the tree, $\rho^S(r)$ is the initial state $\rho_0$.
    \item At each node $u$ (except the root node) we apply a POVM measurement $\{F_s^u\}_s$ on $\rho(u)$ to obtain a classical outcome $s$.  Without loss of generality we take all of the $F_s^u$'s to be rank one; if they are not, we can simply refine $\{F_s^u\}_s$ so that each of its elements is rank one.  We also have a function $f^u$ which takes the index set of $\{F_s^u\}_s$ to the index set of $S = \{\mathcal{C}_i\}_i$. Then we apply the channel $\mathcal{C}_{f^u(s)}$ to the present state. Each child node $v$ of $u$ is connected through the edge $e_{u,s}$.
    \item 
    If $v$ is the child node of $u$ connected through the edge $e_{u,s}$, then
    \begin{equation}
    \rho^S(v) := \mathcal{C}_{f^u(s)}[F_s^u\,\rho^S(u)\,(F_s^u)^\dagger]\,.
    \end{equation}
    Here $F_s^u\,\rho^S(u)\,(F_s^u)^\dagger$ is the unnormalized post-measurement state, to which the channel $\mathcal{C}_{f^u(s)}$ is applied.
    \item For any node $u$ at depth $t$ in the tree, $p^S(u) := \text{\rm tr}(\rho^S(u))$ is the probability that the transcript of measurement outcomes observed by the learning algorithm after $t$ measurements is $u$.  Moreover, $\rho^S(u)/p^S(u)$ is the state of the system at the node $u$.
\end{enumerate}
\end{definition}
\noindent Using this definition, we can provide the following formalizations of Definitions 1 and 2 in the main text:
\begin{definition}[Time-ordered experiment, formal]
A time-ordered experiment is a tree representation for learning a single unitary channel $\mathcal{U}$ without quantum memory.
\end{definition}
\begin{definition}[Out-of-time-order experiment, formal]
An out-of-time-order experiment is a tree representation for learning the collection of two unitary channels $\{\mathcal{U}, \mathcal{U}^\dagger\}$ without quantum memory, where the channels are inverses of one another.
\end{definition}

Suppose we have a tree representation $\mathcal{T}$ for learning a collection of channels without quantum memory, with depth $T$.  Let its associated collection of channels be $S = \{\mathcal{C}_i\}_i$.  Then the probability distribution over measurement outcomes is given by $p^S(\ell)$ where $\ell$ runs over the leafs of the tree.  If instead we had a collection of channels $S' = \{\mathcal{C}_i'\}_i$ with the same index set as $S = \{\mathcal{C}_i\}_i$, then we could run $S'$ through the same learning tree protocol so that the probability distribution over measurement outcomes is now $p^{S'}(\ell)$.  If we did not know if we were handed $S$ or $S'$, then Le Cam's two point method~\cite{yu1997assouad} implies that any post-processing algorithm we might use on our measurement data to distinguish between $S$ and $S'$ can succeed with a probability $p \geq 1/2$ only if
\begin{equation}
\frac{1}{2} \sum_{\ell\, \in \, \text{leaf}(\mathcal{T})} |p^{S}(\ell) - p^{S'}(\ell)| \geq 2p - 1\,.
\end{equation}

Circling back to Theorem 2, it is thus sufficient to show that for any time-ordered experiment corresponding to a learning tree $\mathcal{T}$ of depth $T$,
\begin{equation}
\label{E:toprove1}
\frac{1}{2} \sum_{\ell\, \in \, \text{leaf}(\mathcal{T})} |\mathbb{E}_{\mathcal{U}}[p^{\mathcal{U}}(\ell)] - \mathbb{E}_{\mathcal{U}_1, \mathcal{U}_2}[p^{\mathcal{U}_1 \otimes \mathcal{U}_2}(\ell)]| \leq \frac{2}{3}
\end{equation}
for $T \leq \Omega(d^{1/4})$. This inequality would show that we cannot distinguish between the two ensembles with success probability $p \geq 5/6$ using fewer than $\Omega(d^{1/4})$ queries to the unknown unitary. Here $\mathbb{E}_{\mathcal{U}}$ denotes the Haar average over $U$ in the unitary channel $\mathcal{U}[\rho] = U \rho U^\dagger$, and similarly for $\mathbb{E}_{\mathcal{U}_1,\mathcal{U}_2}$ and $(\mathcal{U}_1 \otimes \mathcal{U}_2)[\rho] = (U_1 \otimes U_2)\rho(U_1^\dagger \otimes U_2^\dagger)$.  We will prove the inequality~\eqref{E:toprove1} below.


\section{Main proofs}

As explained above, we can reformulate Theorem 2 in the following manner:
\begin{theorem}[Equivalent to Theorem 2]
For any time-ordered experiment with learning tree $\mathcal{T}$ with depth $T \leq \Omega(d^{1/4})$, we have
\begin{equation}
    \frac{1}{2} \sum_{\ell\, \in \, \text{\rm leaf}(\mathcal{T})} |\mathbb{E}_{\mathcal{U}}[p^{\mathcal{U}}(\ell)] - \mathbb{E}_{\mathcal{U}_1, \mathcal{U}_2}[p^{\mathcal{U}_1 \otimes \mathcal{U}_2}(\ell)]| \leq \frac{2}{3}\,.
\end{equation}
\end{theorem}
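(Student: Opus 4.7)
The plan is to prove the inequality by expanding both Haar averages through the Weingarten formula, isolating the identity-permutation contribution that matches exactly between the two ensembles, and showing that every remaining contribution is small when $T \leq O(d^{1/4})$. The triangle inequality then delivers the $2/3$ bound.

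As a preliminary step I would assume, without loss of generality, that every POVM $\{F_s^u\}_s$ in the learning tree $\mathcal{T}$ is rank-one, so that for each leaf $\ell = (v_0, v_1, \ldots, v_T)$ the probability $p^U(\ell)$ factors as a product of $T$ scalar traces of the form $\text{tr}(F_{v_t} U \rho_{v_{t-1}} U^\dagger F_{v_t}^\dagger)$. Viewed as a polynomial in the matrix entries of $U$, this is a degree-$T$ monomial in $U$ tensored with a degree-$T$ monomial in $U^\dagger$, so the Weingarten identity yields
\begin{equation}
\mathbb{E}_U[p^U(\ell)] = \sum_{\sigma, \tau \in S_T} \text{Wg}^U(\sigma\tau^{-1}, d)\, M_{\sigma, \tau}(\ell),
\end{equation}
where $M_{\sigma, \tau}(\ell)$ is the contraction of the POVMs, $\rho_0$, and permutation operators on $\mathcal{H}^{\otimes T}$ prescribed by the diagrammatic rules of~\cite{chen2021exponential}. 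A parallel expansion of $\mathbb{E}_{U_1,U_2}[p^{U_1 \otimes U_2}(\ell)]$ gives a four-fold sum over $(\sigma_1, \tau_1, \sigma_2, \tau_2) \in S_T^4$, with Weingarten functions evaluated at subsystem dimension $\sqrt{d}$ acting on each half of the qubits.

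Next I would isolate the identity contributions. Since the identity permutation does not entangle replica copies, $M_{\mathds{1},\mathds{1}}(\ell)$ collapses to $\prod_{t=1}^T \text{tr}(F_{v_t}^\dagger F_{v_t})$, and the same factorization occurs in the product-unitary case when all four permutations are the identity. Using Corollary~1 of the preliminaries, $\text{Wg}^U(\mathds{1},d) = d^{-T}(1 + O(T^{7/2}/d^2))$ and $\text{Wg}^U(\mathds{1},\sqrt{d})^2 = d^{-T}(1 + O(T^{7/2}/d))$. Iterated POVM completeness gives $\sum_\ell \prod_t \text{tr}(F_{v_t}^\dagger F_{v_t}) = 1$, so after summing $|\cdot|$ over leaves the identity contributions from the two ensembles agree with each other up to an additive $O(T^{7/2}/d)$, which is $o(1)$ in the regime $T \leq O(d^{1/4})$.

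The main technical step, and principal obstacle, is bounding the contributions of non-identity permutations. Theorem~3.2 of~\cite{collins2017weingarten} suppresses $|\text{Wg}^U(\pi, d)|$ by $d^{-(2T - \#(\pi))}$, while the tensor network $M_{\sigma,\tau}(\ell)$ can only recover at most $d^{\#(\sigma\tau^{-1})}$ from the permutation contractions when stripped of the POVM data. The essential claim to prove is a per-permutation bound on $\sum_{\ell \in \text{leaf}(\mathcal{T})} |M_{\sigma,\tau}(\ell)|$ of the form $d^{\#(\sigma\tau^{-1})}$ times combinatorial factors, so that each non-trivial cycle leaves a net penalty of $\text{poly}(T)/d$; for the product ensemble the same analysis runs with dimension $\sqrt{d}$ and two independent permutation pairs, but the combinatorics line up so that non-identity corrections beyond the matching leading term are suppressed by $\text{poly}(T)/\sqrt{d}$. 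Lemma~6 of~\cite{aharonov2021quantum} then converts per-permutation bounds into a global bound on the non-identity sum. The delicate bookkeeping, namely verifying that each non-trivial cycle truly costs a weaker power of $d$ (or $\sqrt{d}$) than the Weingarten suppression provides, and that the factorized case does not mimic the leading behavior of the unfactorized case beyond the identity term, is where the proof will require the most care; once in hand, the $T \leq O(d^{1/4})$ cutoff makes the total non-identity contribution $o(1)$ and the triangle inequality yields the claimed $2/3$ bound.
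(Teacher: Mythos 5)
Your high-level strategy is the same as the paper's: expand both Haar averages via Weingarten, pull out the identity-permutation pieces (which are both $\approx d^{-T}$ per leaf), and argue that everything else is $o(1)$. The paper phrases this slightly differently — it compares each ensemble to the maximally depolarizing channel $p^{\mathcal{D}}(\ell) = d^{-T}$ and invokes the triangle inequality, rather than comparing the two ensembles to each other directly — but that is the same move in different clothing. The further splitting into ($\sigma \neq \mathds{1}$, $\tau = \mathds{1}$) via H\"older and ($\tau \neq \mathds{1}$) via cycle analysis, which your sketch gestures at, is also what the paper does.

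The gap is that you have identified the hard step but not solved it. You write that ``the essential claim to prove is a per-permutation bound on $\sum_{\ell} |M_{\sigma,\tau}(\ell)|$'' and defer it to ``delicate bookkeeping,'' but this bookkeeping is in fact the core of the proof and is not at all routine. Two concrete points where a naive attempt would go wrong: (a) the right control parameter is not the number of nontrivial cycles but the \emph{longest} cycle length $L(\tau^{-1})$. The paper proves $\sum_{\ell} |M_{\sigma,\tau}(\ell)|$-type sums are bounded by $d^{T - \lfloor L(\tau^{-1})/2 \rfloor}$, and this floor-of-half exponent (not, say, $d^{T - (\text{number of nontrivial cycles})}$) is exactly what makes the permutation count $N(T,L) < T^L$ balance against the dimension suppression. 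If you tried to charge a uniform $\mathrm{poly}(T)/d$ per nontrivial cycle, as your sketch suggests, the sum over $S_T$ would not converge fast enough. (b) The proof of that lemma has to respect the \emph{adaptivity} of the learning tree: the POVM completeness identities can only be applied leaf-to-root, and the cycle contractions must be split into odd/even-indexed halves and then AM--GM'd ($ab \le \tfrac{1}{2}(a^2+b^2)$) precisely so that every $F_{v_i}^\dagger F_{v_i}$ appears linearly and exactly once, allowing the nested sums $\sum_{v_T} \cdots \sum_{v_1}$ to collapse. Without this structure the contraction cannot be summed over leaves at all. Until you supply an argument with those two ingredients (or a genuine substitute), the proposal is an outline of the paper's proof rather than a proof.
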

\begin{proof}
Let $\mathcal{D}$ be the maximally depolarizing channel so that $p^{\mathcal{D}}(\ell) = 1/d^T$.  Then using the triangle inequality,
\begin{equation}
\frac{1}{2} \sum_{\ell\, \in \, \text{\rm leaf}(\mathcal{T})} |\mathbb{E}_{\mathcal{U}}[p^{\mathcal{U}}(\ell)] - \mathbb{E}_{\mathcal{U}_1, \mathcal{U}_2}[p^{\mathcal{U}_1 \otimes \mathcal{U}_2}(\ell)]| \leq \frac{1}{2} \sum_{\ell\, \in \, \text{\rm leaf}(\mathcal{T})} |p^{\mathcal{D}}(\ell) - \mathbb{E}_{\mathcal{U}}[p^{\mathcal{U}}(\ell)]| + \frac{1}{2} \sum_{\ell\, \in \, \text{\rm leaf}(\mathcal{T})} |p^{\mathcal{D}}(\ell) - \mathbb{E}_{\mathcal{U}_1, \mathcal{U}_2}[p^{\mathcal{U}_1 \otimes \mathcal{U}_2}(\ell)]|\,.
\end{equation}
By Proposition 1 below, the first term on the right-hand side is less than or equal to $1/3$ for $T \leq \Omega(d^{1/3})$.  Similarly,  by Proposition 2 below the second term is less than or equal to $1/3$ for $T \leq \Omega(d^{1/4})$.  This completes the proof.
\end{proof}

\subsection{Unitary channel versus maximally depolarizing channel}

We begin by establishing more notation.  Given a learning tree $\mathcal{T}$ of depth $T$, let $v_0, v_1, ..., v_T = \ell$ be a root-to-leaf path through the tree.  This corresponds to having measured a sequence of POVM elements; let us denote them by $F_{v_1}, F_{v_2}, ..., F_{v_T}$.  Without loss of generality these can be assumed to be rank one, as we explained previously.  We can treat the last round (i.e.~the $T$th round) differently than all of the others, since we do not need to have a residual state after measuring.  This allows us to replace $F_{v_T}$ by a bra $\langle \psi_{v_T}|$.  Our $F_{v_i}$'s and $\langle \psi_{v_T}|$'s satisfy completeness relations, namely
\begin{equation}
\sum_{v \,\in\, \text{child}(v_{i-1})} F_{v}^\dagger F_{v} = \mathds{1} \qquad \text{for }i=1,...,T-1\,,
\end{equation}
and also
\begin{equation}
\sum_{v \,\in\, \text{child}(v_{T-1})} |\psi_{v}\rangle \langle \psi_v| = \mathds{1}\,.
\end{equation}
With this notation at hand, we can write $p^\mathcal{U}(\ell)$ as
\begin{equation}
p^{\mathcal{U}}(\ell) = \langle \psi_{v_T}| U  F_{v_{T-1}} U \cdots U F_{v_2}\rho_0 F_{v_1}^\dagger U^\dagger  \cdots U^\dagger F_{v_{T-1}} U^\dagger |\psi_{v_T}\rangle\,.
\end{equation}
For later, it will also be convenient to define $F_\ell$ by $F_\ell := F_{v_1} \otimes F_{v_2} \otimes \cdots \otimes F_{v_{T-1}}$. 

With these preparations in order, we turn to our desired Proposition:

\begin{proposition}
\label{Prop:prop1}
For $T \leq \Omega(d^{1/3})$, we have
\begin{equation}
\label{E:1normineq1}
\frac{1}{2}\sum_{\ell \,\in\, \text{\rm leaf}(\mathcal{T})} |p^{\mathcal{D}}(\ell) - \mathbb{E}_{\mathcal{U}}[ p^{\mathcal{U}}(\ell)]| \leq \frac{1}{3}\,.
\end{equation}
\end{proposition}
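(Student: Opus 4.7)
The plan is to apply the Weingarten formula to the Haar average $\mathbb{E}_\mathcal{U}[p^\mathcal{U}(\ell)]$, extract the term that reproduces $p^\mathcal{D}(\ell)$, and bound everything else. Using the round-by-round expression for $p^\mathcal{U}(\ell)$ from the setup, I would first group the $T$ copies of $U$ and the $T$ copies of $U^\dagger$ onto a common $\mathcal{H}^{\otimes T}$ and write
\begin{equation}
p^\mathcal{U}(\ell) \;=\; \Tr\!\bigl[M_\ell \cdot \bigl(U^{\otimes T} \otimes (U^\dagger)^{\otimes T}\bigr)\bigr],
\end{equation}
where $M_\ell$ is the operator assembled from $\rho_0$, the rank-one POVMs $F_{v_i}$, and $|\psi_{v_T}\rangle\langle \psi_{v_T}|$ in the round-ordered pattern. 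Haar-averaging and applying the $k = T$ Weingarten formula then gives
\begin{equation}
\mathbb{E}_\mathcal{U}[p^\mathcal{U}(\ell)] \;=\; \sum_{\sigma,\tau \in S_T} \text{Wg}^U(\sigma\tau^{-1}, d)\, C_{\sigma,\tau}(\ell),
\end{equation}
where $C_{\sigma,\tau}(\ell)$ is the scalar obtained by contracting $M_\ell$ against $\widehat{\sigma} \otimes \widehat{\tau}$, most cleanly visualized in the diagrammatic tensor-network conventions of~\cite{chen2021exponential}.

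Next I would isolate the pair $(\sigma,\tau) = (\mathds{1},\mathds{1})$, the ``round-local'' pairing in which each $U$ is contracted with the $U^\dagger$ of the same round. A direct diagrammatic inspection shows $C_{\mathds{1},\mathds{1}}(\ell) = d^T\, p^\mathcal{D}(\ell)$, because each round then executes $U(\cdot)U^\dagger \mapsto \Tr(\cdot)\mathds{1}/d$, precisely as the depolarizing channel does, up to the overall $d^T$ coming from the Weingarten normalization. By Corollary~\ref{E:corr1}, $|d^T\,\text{Wg}^U(\mathds{1},d) - 1| \leq O(T^{7/2}/d^2)$, so this single term equals $p^\mathcal{D}(\ell)$ up to a uniform multiplicative factor; summing over leaves and using $\sum_\ell p^\mathcal{D}(\ell) = 1$ bounds its contribution to~\eqref{E:1normineq1} by $O(T^{7/2}/d^2)$.

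The remaining pairs $(\sigma,\tau) \neq (\mathds{1},\mathds{1})$ constitute the error term, and I would control their sum-over-leaves using two inputs. First, the Collins--Matsumoto Theorem 3.2 and Lemma 6 of~\cite{aharonov2021quantum} jointly control $|\text{Wg}^U(\sigma\tau^{-1},d)|$ and its total mass $\sum_\pi |\text{Wg}^U(\pi,d)|$. Second, each $|C_{\sigma,\tau}(\ell)|$ is bounded by exchanging the sum over $\ell$ with the permutation sum and repeatedly invoking POVM completeness $\sum_{v \in \text{child}(u)}F_v^\dagger F_v = \mathds{1}$ and $\sum_v |\psi_v\rangle\langle\psi_v| = \mathds{1}$ on adjacent branches; each such collapse replaces a subtree by an identity and produces at most a factor of $d$. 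Organized by the cycle structure of $\sigma$, $\tau$, and how those cycles ``thread'' through the round ordering, this should yield a tree-independent polynomial bound in $T/d$ that is at most $1/3$ when $T \leq \Omega(d^{1/3})$.

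The main obstacle is precisely this cycle-by-cycle bookkeeping: a cycle of length $\ell$ in $\sigma\tau^{-1}$ carries Weingarten suppression of order $d^{-(2\ell-1)}$, but can pick up compensating factors of $d$ whose number is dictated by how that cycle interacts with the round ordering, with the rank-one measurement elements, and with the POVM-completeness collapses. Showing that the net exponent is always negative enough — uniformly in the adversarially chosen tree $\mathcal{T}$ — to produce the claimed $O(\text{poly}(T)/d)$ bound, and hence the $T = O(d^{1/3})$ scaling in the statement, is the delicate technical step that drives the proof.
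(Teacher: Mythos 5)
Your outline matches the paper's overall strategy: expand $\mathbb{E}_\mathcal{U}[p^\mathcal{U}(\ell)]$ via the Weingarten formula, isolate the pair $(\sigma,\tau)=(\mathds{1},\mathds{1})$ (which reproduces $p^\mathcal{D}(\ell)$ up to a $\text{Wg}^U(\mathds{1},d)$ prefactor), bound that prefactor via Corollary~\ref{E:corr1} to get $O(T^{7/2}/d^2)$, and control the remaining $(\sigma,\tau)$ pairs by combining Weingarten mass estimates with POVM-completeness collapses of the leaf sum. However, the proposal leaves the decisive step unproven: you explicitly defer the ``cycle-by-cycle bookkeeping'' as ``the delicate technical step that drives the proof,'' which is precisely the step that determines the exponent $d^{1/3}$ in the statement. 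As written, you have identified where the difficulty lives, but not resolved it.

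For your benefit, the paper's resolution has more structure than your sketch suggests. The remainder is split into two pieces: $(\sigma\neq\mathds{1},\tau=\mathds{1})$, bounded by $O(T^2/d)$ directly from H\"{o}lder's inequality with $\|\sigma\|_\infty=1$ and Lemma~6 of~\cite{aharonov2021quantum}; and $(\text{any }\sigma,\tau\neq\mathds{1})$, which is the hard case. For the latter, after applying H\"{o}lder, $\sigma$ decouples entirely from the diagrammatic term, and the factor that survives depends only on the cycle structure of $\tau^{-1}$, \emph{not} of $\sigma\tau^{-1}$ as your write-up implies. Since each $F_{v_i}$ is rank one, $\|\cdot\|_1=\|\cdot\|_2$ converts the factorized 1-norm into a product of $\sqrt{\text{tr}(F_{v_i}^\dagger F_{v_i}\, F_{v_j}^\dagger F_{v_j})}$ over $i\to j\in\tau^{-1}$; the inequality $ab\leq\frac{1}{2}(a^2+b^2)$ then trades the square roots for genuine traces so that the adaptive POVM completeness identities can be applied from leaf to root, yielding
\begin{equation}
\sum_{\ell\in\text{leaf}(\mathcal{T})}(\cdots) \;\leq\; d^{\,T-\lfloor L(\tau^{-1})/2\rfloor},
\end{equation}
with $L(\tau^{-1})$ the longest cycle length. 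Together with the permutation count $N(T,L)<T^L$ and the Weingarten bound $d^T\sum_\sigma|\text{Wg}^U(\sigma^{-1},d)|\leq 1+O(T^2/d)$, this produces the geometric series $\sum_{L\geq 2}T^L d^{-\lfloor L/2\rfloor}=O(T^3/d)$, which is where the $T\lesssim d^{1/3}$ threshold actually comes from. A smaller point: your heuristic that an $\ell$-cycle of $\sigma\tau^{-1}$ ``carries Weingarten suppression $d^{-(2\ell-1)}$'' is not quite right; a single $\ell$-cycle (rest fixed) gives $\text{Wg}^U\sim d^{-T-\ell+1}$, i.e.\ a suppression of $d^{-(\ell-1)}$ relative to $\text{Wg}^U(\mathds{1},d)\sim d^{-T}$. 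But the more material issue is the missing leaf-sum lemma.
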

\begin{proof}
Using our Haar integration results from earlier, we have
\begin{align}
    \includegraphics[scale=.28, valign = c]{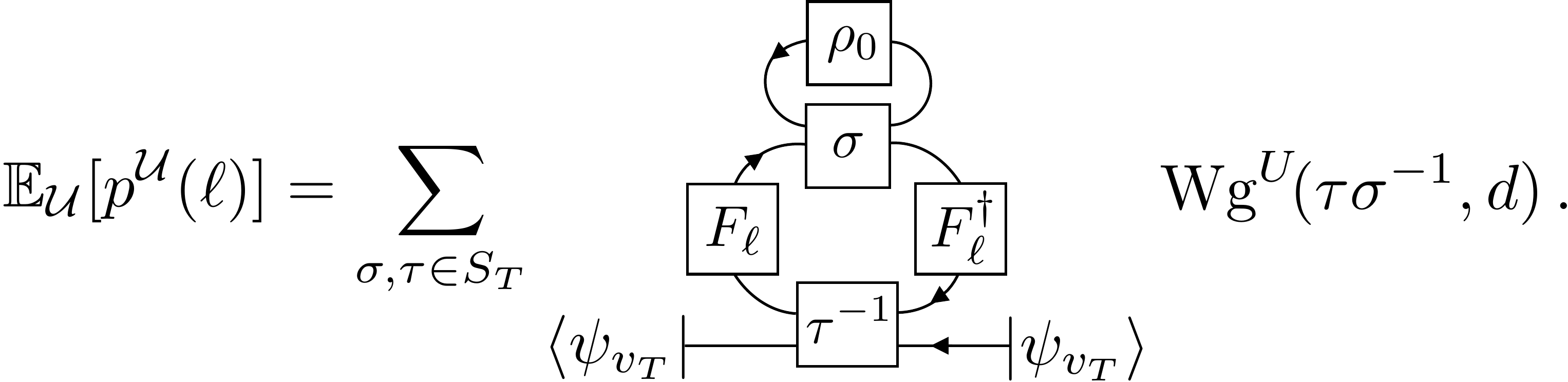} \nonumber
\end{align}
We let $p_{\sigma,\tau}(\ell)$ denote the summand of the above.  Now we can upper bound the left-hand side of~\eqref{E:1normineq1} using the triangle inequality and the Cauchy-Schwarz inequality as
\begin{align}
 \label{E:1normineq2}
\frac{1}{2}\sum_{\ell \,\in\, \text{\rm leaf}(\mathcal{T})} |p^{\mathcal{D}}(\ell) - \mathbb{E}_{\mathcal{U}}[ p^{\mathcal{U}}(\ell)]| &\leq \frac{1}{2}\sum_{\ell \,\in\, \text{\rm leaf}(\mathcal{T})} |p^{\mathcal{D}}(\ell) - p_{\mathds{1},\mathds{1}}(\ell)| + \frac{1}{2}\sum_{\ell \,\in\, \text{\rm leaf}(\mathcal{T})} \sum_{\sigma \not = \mathds{1}}| p_{\sigma,\mathds{1}}(\ell)|  \nonumber \\
& \qquad \qquad \qquad \qquad \qquad \qquad \qquad \qquad \qquad + \frac{1}{2}\sum_{\ell \,\in\, \text{\rm leaf}(\mathcal{T})} \sum_{\tau \not = \mathds{1},\,\sigma}|p_{\sigma, \tau}(\ell)|\,.
\end{align}
We will proceed by bounding each of the three terms on the right-hand side of~\eqref{E:1normineq2} in turn. \\ \\
\textbf{First term} \\ \\
For the first term, we can apply Cauchy-Schwarz to find the upper bound
\begin{align}
    \includegraphics[scale=.28, valign = c]{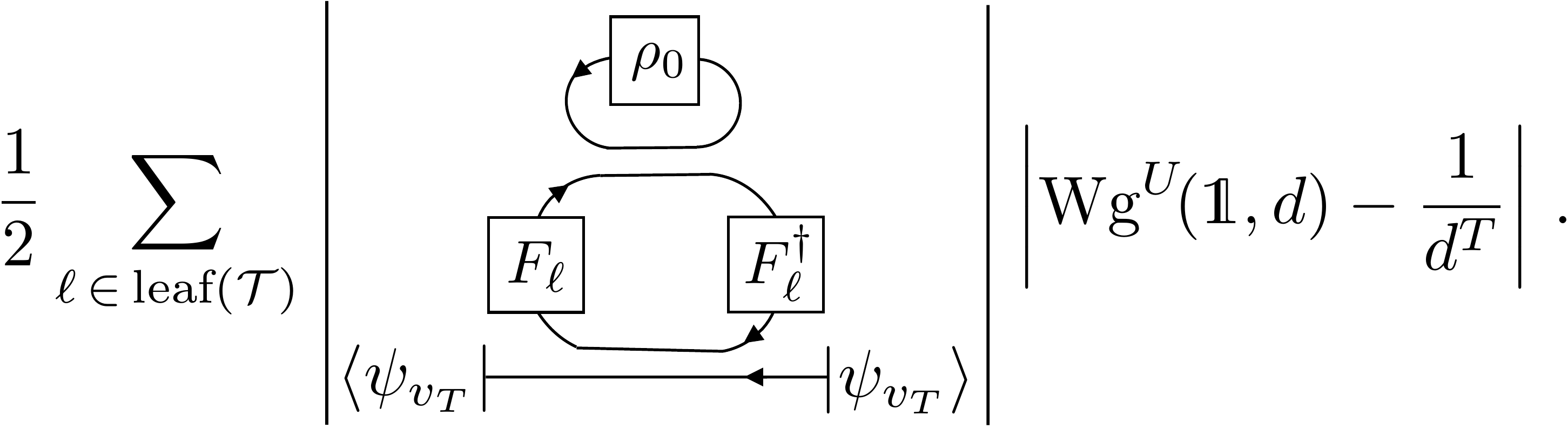} \nonumber
\end{align}
The absolute value in the first term can be removed since its argument is positive.  Now we can explicitly sum over leafs to obtain
\begin{equation}
\frac{d^T}{2}\left|\text{Wg}^{U}(\mathds{1},d) - \frac{1}{d^T}\right|\,.
\end{equation}
But using Corollary~\ref{E:corr1} we have $\left|\text{Wg}^{U}(\mathds{1},d) - \frac{1}{d^T}\right| \leq O(T^{7/2}/d^{T+2})$ for $T < \left(\frac{d}{\sqrt{6}}\right)^{4/7}$, and so in total
\begin{equation}
\frac{1}{2}\sum_{\ell \,\in\, \text{\rm leaf}(\mathcal{T})} |p^{\mathcal{D}}(\ell) - p_{\mathds{1},\mathds{1}}(\ell)| \leq O\!\left(\frac{T^{7/2}}{d^{2}}\right)\,.
\end{equation}
$$$$
\textbf{Second term}
\\ \\
Applying Cauchy-Schwarz to the second term on the right-hand side of~\eqref{E:1normineq2}, we have the upper bound
\begin{align}
\label{E:secondtermineq1}
    \includegraphics[scale=.28, valign = c]{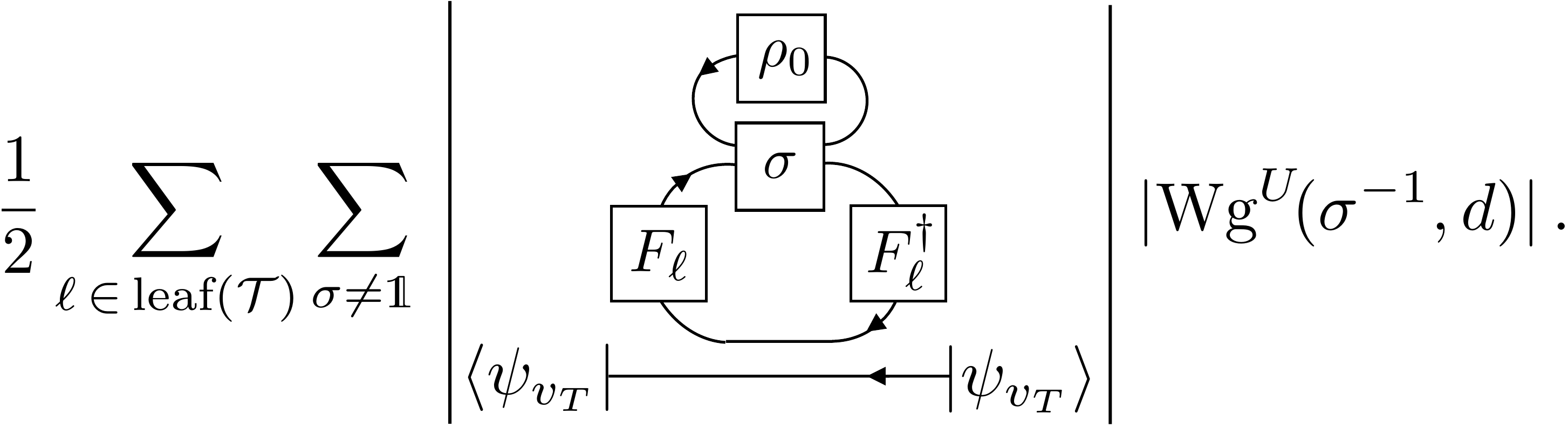} 
\end{align}
The first term can be upper bounded using H\"{o}lder's inequality
\begin{align}
    \includegraphics[scale=.28, valign = c]{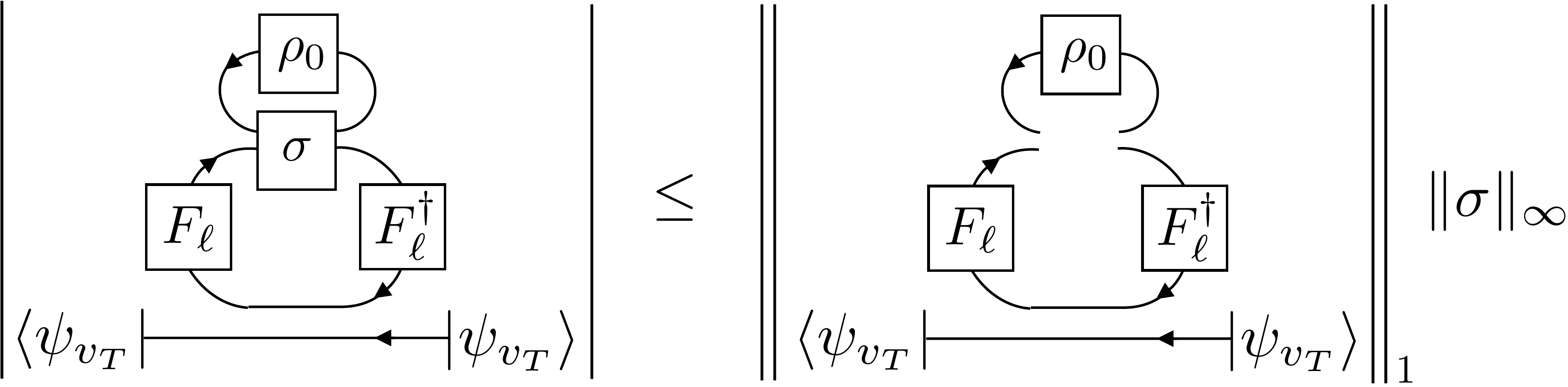} \nonumber
\end{align}
and we further use the equality
\begin{align}
\label{E:usefulmarker1}
    \includegraphics[scale=.28, valign = c]{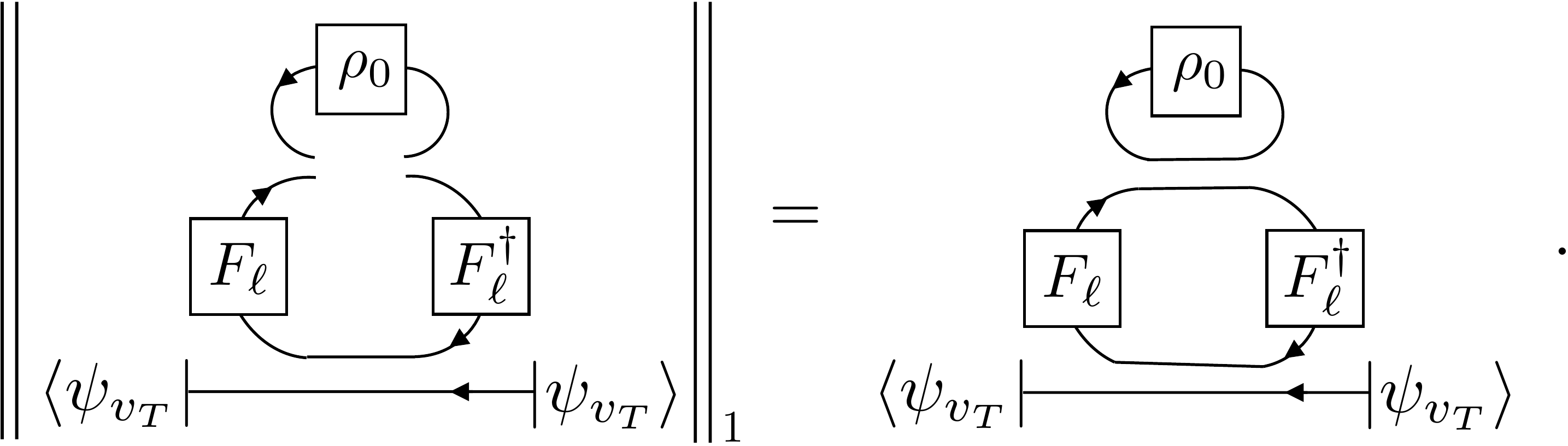}
\end{align}
This follows from the fact that $\|A \otimes B\|_1 = \|A\|_1 \|B\|_1 = \text{tr}(A) \,\text{tr}(B)$ if $A$ and $B$ are positive semi-definite.  Then~\eqref{E:secondtermineq1} is upper bounded by
\begin{align}
\label{E:usefulmarker2}
    \includegraphics[scale=.28, valign = c]{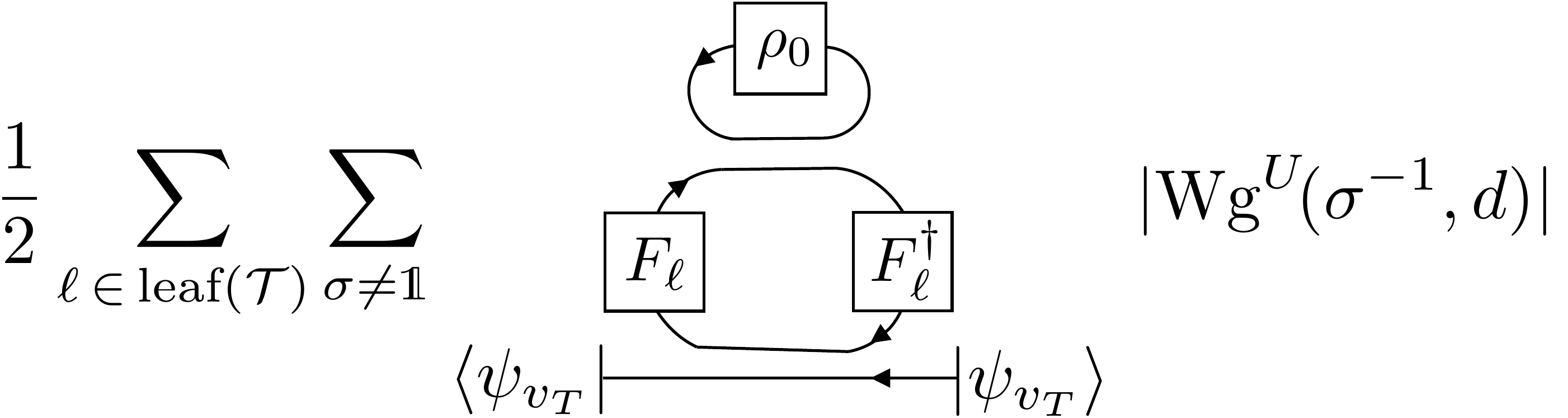}
\end{align}
and so summing over leafs we obtain
\begin{equation}
\frac{d^T}{2}\sum_{\sigma \not = \mathds{1}} |\text{Wg}^U(\sigma^{-1},d)|\,.
\end{equation}
But this quantity is less than or equal to $O(T^2/d)$ using Lemma 6 of~\cite{aharonov2021quantum}; thus we summarily find
\begin{equation}
 \frac{1}{2}\sum_{\ell \,\in\, \text{\rm leaf}(\mathcal{T})} \sum_{\sigma \not = \mathds{1}}| p_{\sigma,\mathds{1}}(\ell)|  \leq O\!\left(\frac{T^2}{d}\right)\,.
\end{equation}
$$$$
\textbf{Third term}
\\ \\
As usual, we apply the Cauchy-Schwarz inequality to the last term on the right-hand side of~\eqref{E:1normineq2} to obtain
\begin{align}
\label{E:thirdtermfirstfig}
    \includegraphics[scale=.28, valign = c]{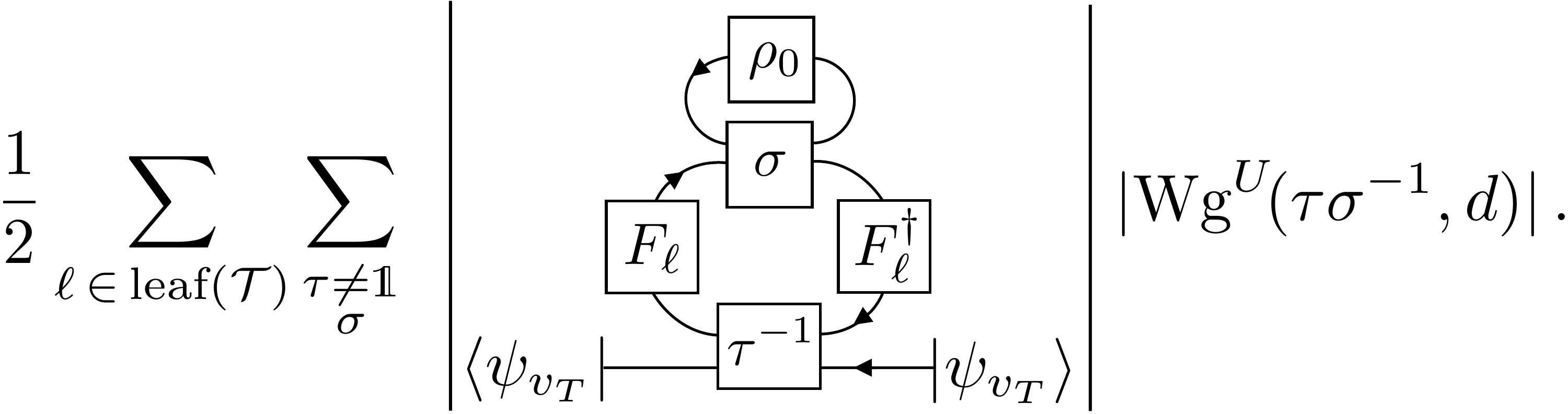}
\end{align}
Similar to the previous case, we apply H\"{o}lder's inequality to the diagrammatic term in the summand as
\begin{align}
    \includegraphics[scale=.28, valign = c]{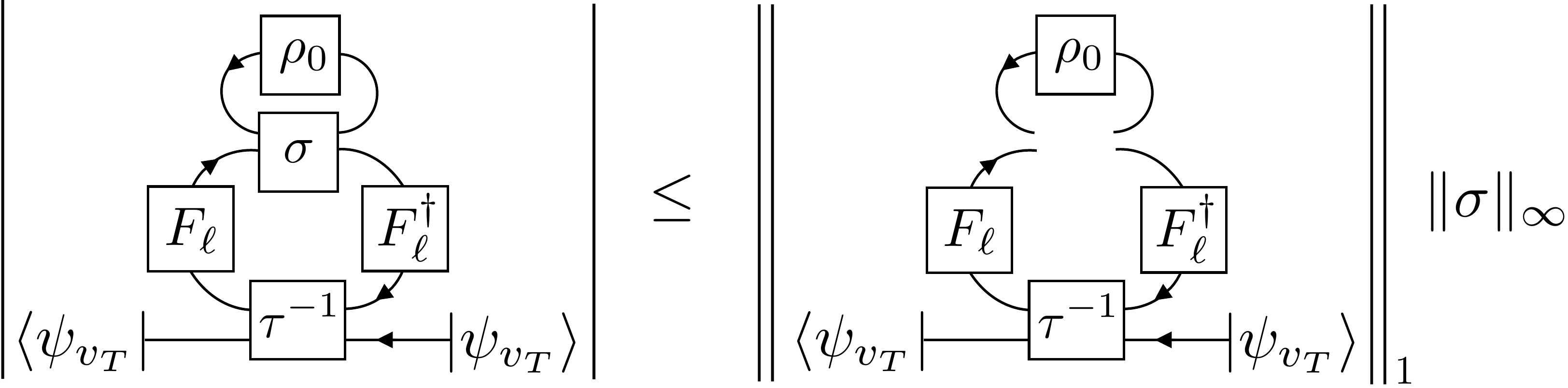}
\end{align}
noting again that $\|\sigma\|_\infty = 1$.  The other 1-norm term further simplifies to
\begin{align}
    \includegraphics[scale=.28, valign = c]{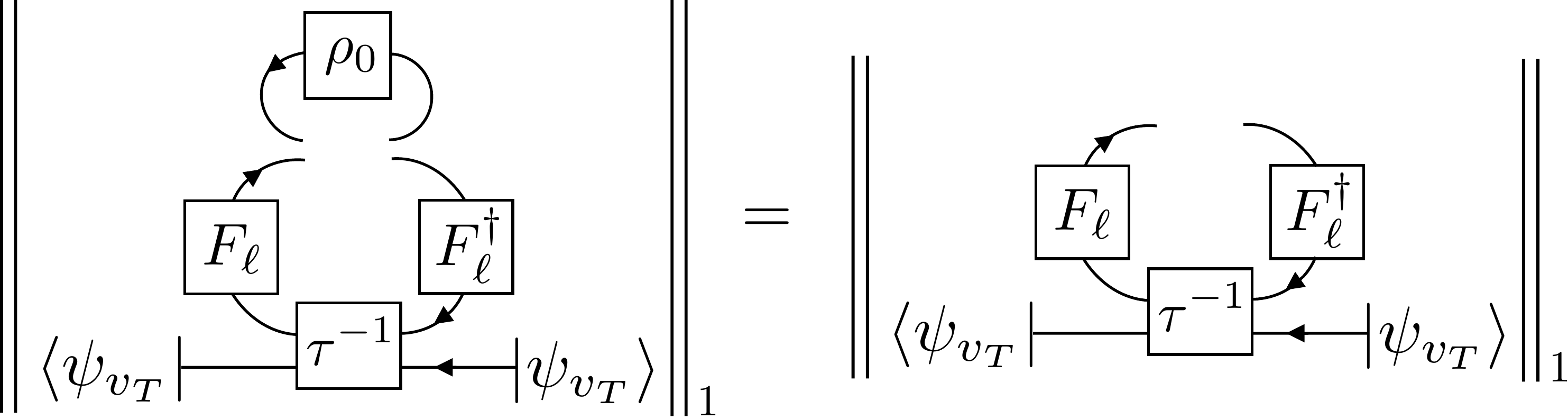}
\end{align}
where we have used $\|A \otimes B\|_1 = \|A\|_1 \|B\|_1$, where in the above setting $\|B\|_1 = 1$.
It is convenient to simplify the remaining 1-norm term for fixed $\tau^{-1}$.  To do so, we decompose $\tau^{-1}$ into cycles as $\tau^{-1} = C_1 C_2 \cdots C_{\#(\tau^{-1})}$, and will say that $i \to j$ belongs to the $m$th cycle $C_m$ if $C_m = (\cdots ij \cdots )$.  More generally we also say that $i \to j$ belongs to $\tau^{-1}$.  Further letting $v_0 = r, v_1,...,v_{T-1}, v_T = \ell$ be the root-to-leaf path terminating in $\ell$, we leverage the following lemma:
\begin{lemma}
If $\tau^{-1}$ contains the 1-cycle $T \to T$ then
\begin{align}
\label{E:1normproduct1}
    \includegraphics[scale=.28, valign = c]{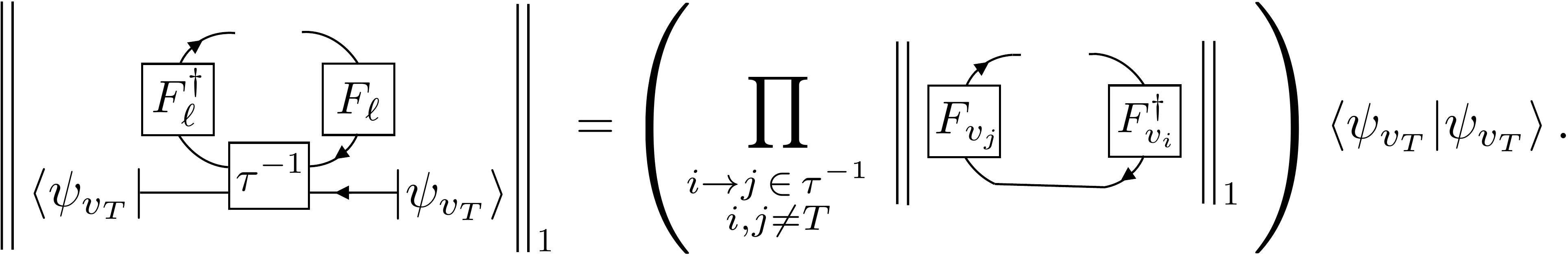}
\end{align}
Otherwise if $\tau^{-1}$ does not contain the 1-cycle $T \to T$, then it must contain some $\hat{i} \to T \to \hat{j}$ (where possibly $\hat{i} = \hat{j}$) in which case
\begin{align}
\label{E:1normproduct2}
    \includegraphics[scale=.28, valign = c]{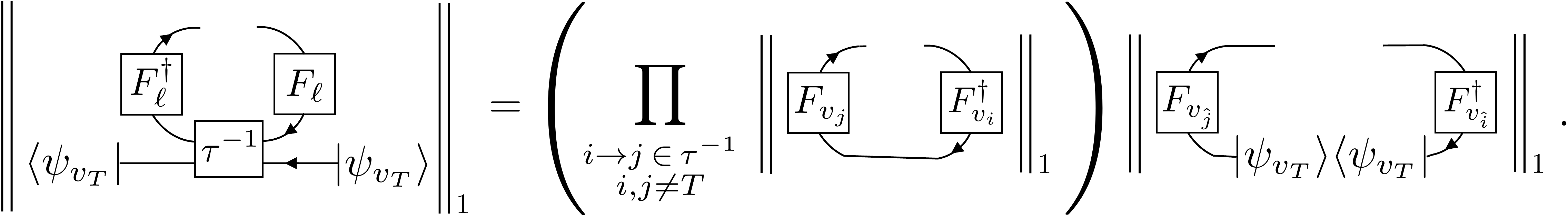}
\end{align}
\end{lemma}
\begin{proof}
The identities follow by contracting tensor indices of $F_\ell$, $F_\ell^\dagger$, $\langle \psi_{v_T}|$ and $|\psi_{v_T}\rangle$ according to $\tau^{-1}$, and then using the identity $\|A_1 \otimes A_2 \otimes \cdots \otimes A_n\|_1 = \prod_{i=1}^n \|A_i\|_1$.
\end{proof}
Next we simplify the 1-norm terms appearing in~\eqref{E:1normproduct1} and~\eqref{E:1normproduct2}. Since $\|A\|_1 = \|A\|_2$ when $A$ is rank one, we have
\begin{align}
    \includegraphics[scale=.28, valign = c]{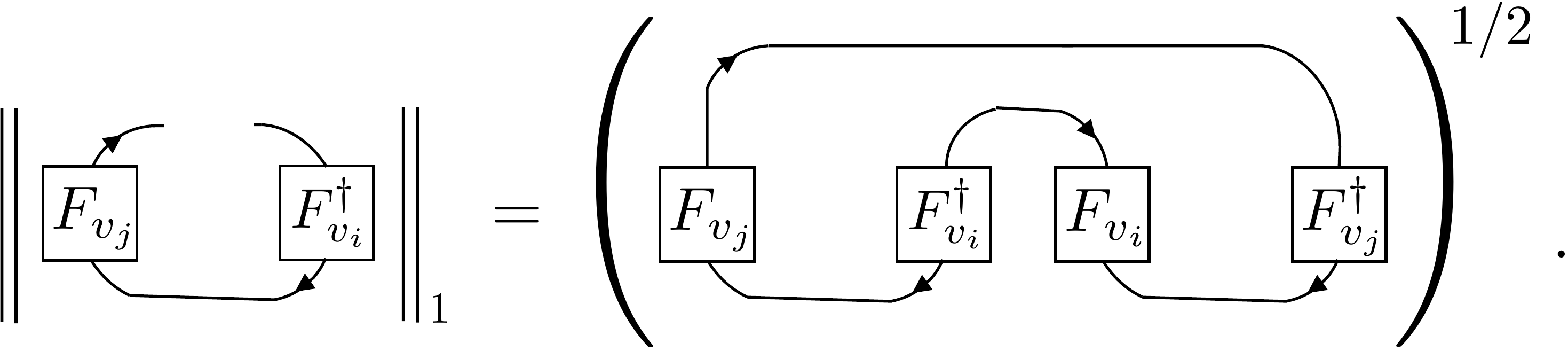}
\end{align}
Equivalently, this is
\begin{align}
    \includegraphics[scale=.28, valign = c]{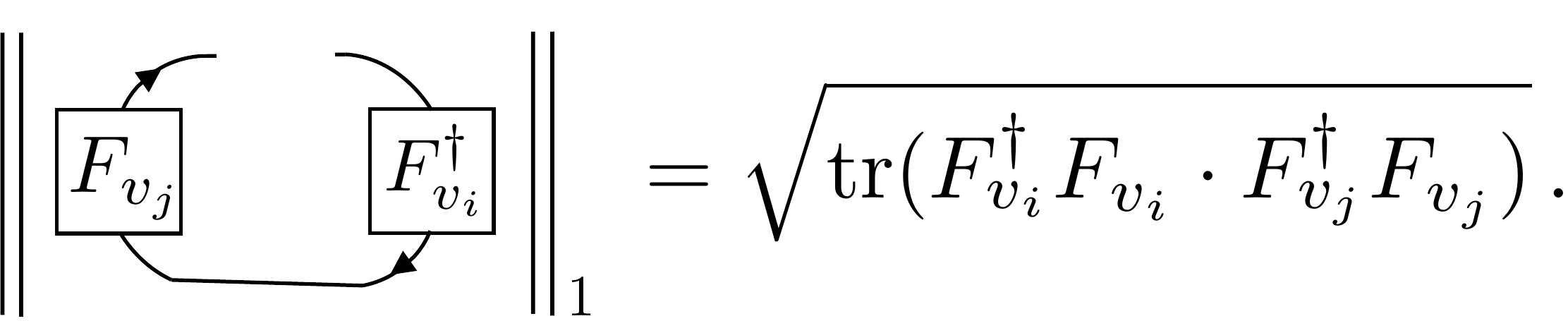}
\end{align}
Following the same logic, we obtain the equality
\begin{align}
    \includegraphics[scale=.28, valign = c]{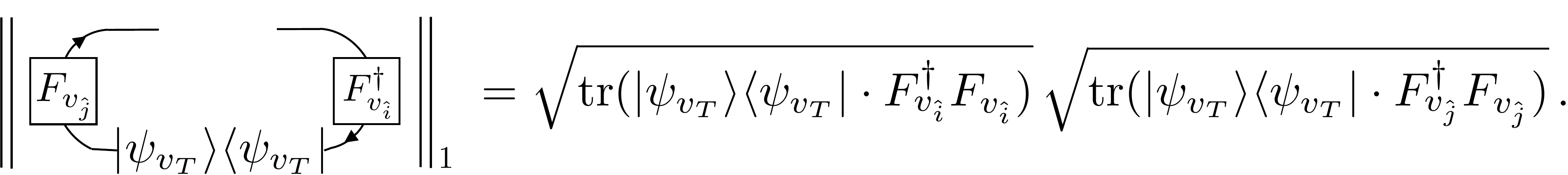}
\end{align}
Accordingly,~\eqref{E:1normproduct1} is equal to
\begin{equation}
\label{E:1normproduct3}
\left(\prod_{\substack{i \to j \in \tau^{-1} \\ i,j \not = T}} \sqrt{\text{tr}(F_{v_i}^\dagger F_{v_i}\cdot F_{v_j}^\dagger F_{v_j})}\right) \langle \psi_{v_T}|\psi_{v_T}\rangle
\end{equation}
and likewise~\eqref{E:1normproduct2} is equal to
\begin{equation}
\label{E:1normproduct4}
\left(\prod_{\substack{i \to j \in \tau^{-1} \\i,j \not = T}} \sqrt{\text{tr}(F_{v_i}^\dagger F_{v_i} \cdot F_{v_j}^\dagger F_{v_j})}\right) \sqrt{\text{tr}(|\psi_{v_T}\rangle \langle \psi_{v_T}| \cdot F_{v_{\hat{i}}}^\dagger F_{v_{\hat{i}}})} \,\sqrt{\text{tr}(|\psi_{v_T}\rangle \langle \psi_{v_T}| \cdot F_{v_{\hat{j}}}^\dagger F_{v_{\hat{j}}})}\,.
\end{equation}
To make the two cases look the same, we simply define $F_{v_T} := |\psi_{v_T}\rangle \langle \psi_{v_T}|$ so that both~\eqref{E:1normproduct3} and~\eqref{E:1normproduct4} can be written as
\begin{equation}
\label{E:niceproduct1}
\prod_{i \to j \in \tau^{-1}} \sqrt{\text{tr}(F_{v_i}^\dagger F_{v_i} \cdot F_{v_j}^\dagger F_{v_j})}\,.
\end{equation}
To further bound~\eqref{E:niceproduct1}, we decompose the product into cycles as
\begin{equation}
\label{E:cycledecomp1}
\prod_{m = 1}^{\#(\tau^{-1})}\prod_{i \to j \in C_m} \sqrt{\text{tr}(F_{v_i}^\dagger F_{v_i} \cdot F_{v_j}^\dagger F_{v_j})}
\end{equation}
and treat each $\prod_{i \to j \in C_m} \sqrt{\text{tr}(F_{v_i}^\dagger F_{v_i} \cdot F_{v_j}^\dagger F_{v_j})}$ term separately.

Suppose that $C_m$ is a cycle of length $p$ (often denoted by $|C_m| =  p$); then we can write it as $C_m = (a_{m,1} a_{m,2} \cdots a_{m,p})$ where $\{a_{m,1}, a_{m,2},...,a_{m,p}\} \subseteq \{1,2,...,T\}$.  Then we can write $\prod_{i \to j \in C_m} \sqrt{\text{tr}(F_{v_i}^\dagger F_{v_i} \cdot F_{v_j}^\dagger F_{v_j})}$ as
\begin{equation}
\label{E:cyclenewlabels1}
\prod_{i = 1}^p \sqrt{\text{tr}(F_{v_{a_{m,i}}}^\dagger F_{v_{a_{m,i}}} \cdot F_{v_{a_{m,i+1}}}^\dagger F_{v_{a_{m,i+1}}})}
\end{equation}
where the $i$ subscripts are treated modulo $p$.  Consider two cases: \\ \\
\textbf{Case 1:} $p$\textit{ is even}.  We can split up~\eqref{E:cyclenewlabels1} into two products as
\begin{equation}
\left(\prod_{i\text{ odd}} \sqrt{\text{tr}(F_{v_{a_{m,i}}}^\dagger F_{v_{a_{m,i}}} \cdot F_{v_{a_{m,i+1}}}^\dagger F_{v_{a_{m,i+1}}})} \right) \left(\prod_{j\text{ even}} \sqrt{\text{tr}(F_{v_{a_{m,j}}}^\dagger F_{v_{a_{m,j}}} \cdot F_{v_{a_{m,j+1}}}^\dagger F_{v_{a_{m,j+1}}})} \right)   
\end{equation}
and using the inequality $ab \leq \frac{1}{2}(a^2 + b^2)$ we obtain the upper bound
\begin{equation}
\frac{1}{2} \prod_{i\text{ odd}} \text{tr}(F_{v_{a_{m,i}}}^\dagger F_{v_{a_{m,i}}} \cdot F_{v_{a_{m,i+1}}}^\dagger F_{v_{a_{m,i+1}}}) + \frac{1}{2}\prod_{j\text{ even}} \text{tr}(F_{v_{a_{m,j}}}^\dagger F_{v_{a_{m,j}}} \cdot F_{v_{a_{m,j+1}}}^\dagger F_{v_{a_{m,j+1}}})\,.   
\end{equation}
Let us define the first term as $\frac{1}{2}\, R_{m,-}$ and the second term as $\frac{1}{2}\,R_{m,+}$. \\ \\
\textbf{Case 2:} $p$\textit{ is odd}.  Here we opt to split up~\eqref{E:cyclenewlabels1} as
\begin{align}
&\sqrt{\text{tr}(F_{v_{a_{m,p}}}^\dagger F_{v_{a_{m,p}}} \cdot F_{v_{a_{m,1}}}^\dagger F_{v_{a_{m,1}}})}\left(\prod_{\substack{i\text{ odd} \\ 1 \leq i \leq p - 2}} \sqrt{\text{tr}(F_{v_{a_{m,i}}}^\dagger F_{v_{a_{m,i}}} \cdot F_{v_{a_{m,i+1}}}^\dagger F_{v_{a_{m,i+1}}})} \right) \nonumber \\
& \qquad \qquad \qquad \qquad \qquad \qquad \qquad \qquad \qquad \qquad \qquad \qquad \times \left(\prod_{j\text{ even}} \sqrt{\text{tr}(F_{v_{a_{m,j}}}^\dagger F_{v_{a_{m,j}}} \cdot F_{v_{a_{m,j+1}}}^\dagger F_{v_{a_{m,j+1}}})} \right)       
\end{align}
For $A,B$ positive semi-definite we have $\text{tr}(A B) \leq \|A\|_2 \|B\|_2 \leq \|A\|_1 \|B\|_1 \leq \text{tr}(A) \text{tr}(B)$ and so $\sqrt{\text{tr}(F_{v_{a_{m,p}}}^\dagger F_{v_{a_{m,p}}} \cdot F_{v_{a_{m,1}}}^\dagger F_{v_{a_{m,1}}})} \leq \sqrt{\text{tr}(F_{v_{a_{m,p}}}^\dagger F_{v_{a_{m,p}}})}\, \sqrt{\text{tr}(F_{v_{a_{m,1}}}^\dagger F_{v_{a_{m,1}}})}$\,.  Then the above equation is upper bounded by
\begin{align}
&\sqrt{\text{tr}(F_{v_{a_{m,p}}}^\dagger F_{v_{a_{m,p}}})}\, \sqrt{\text{tr}(F_{v_{a_{m,1}}}^\dagger F_{v_{a_{m,1}}})}\left(\prod_{\substack{i\text{ odd} \\ 1 \leq i \leq p - 2}} \sqrt{\text{tr}(F_{v_{a_{m,i}}}^\dagger F_{v_{a_{m,i}}} \cdot F_{v_{a_{m,i+1}}}^\dagger F_{v_{a_{m,i+1}}})} \right) \nonumber \\
& \qquad \qquad \qquad \qquad \qquad \qquad \qquad \qquad \qquad \qquad \qquad \qquad \times \left(\prod_{j\text{ even}} \sqrt{\text{tr}(F_{v_{a_{m,j}}}^\dagger F_{v_{a_{m,j}}} \cdot F_{v_{a_{m,j+1}}}^\dagger F_{v_{a_{m,j+1}}})} \right)       
\end{align}
and so using $ab \leq \frac{1}{2}(a^2 + b^2)$ we have the further upper bound
\begin{align}
& \frac{1}{2}\,\text{tr}(F_{v_{a_{m,p}}}^\dagger F_{v_{a_{m,p}}}) \prod_{\substack{i\text{ odd} \\ 1 \leq i \leq p - 2}} \text{tr}(F_{v_{a_{m,i}}}^\dagger F_{v_{a_{m,i}}} \cdot F_{v_{a_{m,i+1}}}^\dagger F_{v_{a_{m,i+1}}}) \nonumber \\
&\qquad \qquad \qquad \qquad \qquad \qquad \qquad \qquad + \frac{1}{2}\,\text{tr}(F_{v_{a_{m,1}}}^\dagger F_{v_{a_{m,1}}}) \prod_{j\text{ even}} \text{tr}(F_{v_{a_{m,j}}}^\dagger F_{v_{a_{m,j}}} \cdot F_{v_{a_{m,j+1}}}^\dagger F_{v_{a_{m,j+1}}})\,.  
\end{align}
We similarly call the first term $\frac{1}{2}\, R_{m,-}$ and the second term $\frac{1}{2}\,R_{m,+}$.
\\ \\
Taken together, Case 1 and Case 2 give us the following bound on~\eqref{E:cycledecomp1}:
\begin{align}
\prod_{m = 1}^{\#(\tau^{-1})}\prod_{i \to j \in C_m} \sqrt{\text{tr}(F_{v_i}^\dagger F_{v_i} \cdot F_{v_j}^\dagger F_{v_j})} &\leq \frac{1}{2^{\#(\tau^{-1})}} \prod_{m=1}^{\#(\tau^{-1})} (R_{m,-} + R_{m,+}) \nonumber \\
&= \frac{1}{2^{\#(\tau^{-1})}} \sum_{i_1,...,i_{\#(\tau^{-1})}= \pm} R_{1,i_1} R_{2,i_2} \cdots R_{\#(\tau^{-1}), i_{\#(\tau^{-1})}}\,.
\end{align}
Since the $R_{m,\pm}$'s depend implicitly on the leaf $\ell$, we add an $\ell$ superscript as $R_{m,\pm}^\ell$ to make the dependence explicit.  The summand $R^\ell_{1,i_1} R^\ell_{2,i_2} \cdots R^\ell_{\#(\tau^{-1}), i_{\#(\tau^{-1})}}$ for fixed indices $i_1, i_2,..., i_{\#(\tau^{-1})}$ has the feature that each $F_{v_i}^\dagger F_{v_i}$ for $i=1,...,T$ appears exactly once.  By virtue of this fact we can establish the following lemma:
\begin{lemma}
For any fixed set of indices $i_1, i_2,..., i_{\#(\tau^{-1})} \in \{+,-\}$, we have
\begin{equation}
\label{E:Rsum1}
\sum_{\ell \,\in\, \text{\rm leaf}(\mathcal{T})}  \,R^\ell_{1,i_1} R^\ell_{2,i_2} \cdots R^\ell_{\#(\tau^{-1}), i_{\#(\tau^{-1})}} \leq d^{T - \left\lfloor \frac{L(\tau^{-1})}{2} \right\rfloor}
\end{equation}
where $L(\tau^{-1})$ is the length of the longest cycle in $\tau^{-1}$.
\end{lemma}
\begin{proof}
We have the identities
\begin{align}
\label{E:usefulidentityminus1}
\sum_{v \, \in \, \text{child}(v_{i-1})} \text{tr}(F_{v}^\dagger F_{v} \cdot F_{v_j}^\dagger F_{v_j}) &= \text{tr}(F_{v_j}^\dagger F_{v_j})  \\
\label{E:usefulidentity0}
\sum_{v \, \in \, \text{child}(v_{i-1})} \text{tr}(F_{v}^\dagger F_{v}) &= d\,.
\end{align}
In a slight abuse of notation, we rewrite these as
\begin{align}
\label{E:usefulidentity1}
\sum_{v_i} \text{tr}(F_{v_i}^\dagger F_{v_i} \cdot F_{v_j}^\dagger F_{v_j}) &= \text{tr}(F_{v_j}^\dagger F_{v_j})  \\
\label{E:usefulidentity2}
\sum_{v_i} \text{tr}(F_{v_i}^\dagger F_{v_i}) &= d\,.
\end{align}
For fixed $i_1, i_2,..., i_{\#(\tau^{-1})} \in \{+,-\}$, we have
\begin{equation}
R^\ell_{1,i_1} R^\ell_{2,i_2} \cdots R^\ell_{\#(\tau^{-1}), i_{\#(\tau^{-1})}} = \left(\prod_{i \in \mathcal{S}_1} \text{tr}(F_{v_i}^\dagger F_{v_i})\right) \left(\prod_{(j,j') \in \mathcal{S}_2} \text{tr}(F_{v_j}^\dagger F_{v_j} \cdot F_{v_{j'}}^\dagger F_{v_{j'}})\right)
\end{equation}
where $\mathcal{S}_1 \subset \{1,...,T\}$ is the set of indices for which a $\text{tr}(F_{v_i}^\dagger F_{v_i})$ term appears, and $\mathcal{S}_2 \subset \{1,...,T\} \times \{1,...,T\}$ is the set of unordered pairs $(j,j')$ for which a $\text{tr}(F_{v_j}^\dagger F_{v_j} \cdot F_{v_{j'}}^\dagger F_{v_{j'}})$ term appears.  Note that the size of $\mathcal{S}_1$ is the number of odd-length cycles of $\tau^{-1}$.  As noted above, each $v_i$ for $i=1,...,T$ appears exactly once in the above expression.  Writing the~\eqref{E:Rsum1} as
\begin{equation}
\sum_{v_1}  \cdots \sum_{v_{T-1}} \sum_{v_T} \left(\prod_{i \in \mathcal{S}_1} \text{tr}(F_{v_i}^\dagger F_{v_i})\right) \left(\prod_{(j,j') \in \mathcal{S}_2} \text{tr}(F_{v_j}^\dagger F_{v_j} \cdot F_{v_{j'}}^\dagger F_{v_{j'}})\right)\,,
\end{equation}
we can perform the inner-most sum over $v_T$ following by the $v_{T-1}$ sum, and so on through the $v_1$ sum.  That is, we are summing from the leafs of the tree back up to the root; this order of summation is necessitated because of the adaptive nature of the measurement strategies that we allow.  That is, the choice of measurements in the future (i.e.~higher depth in the learning tree) depend on measurements made in the past (i.e.~lower depth in the learning tree), but not conversely.  Leveraging the identities~\eqref{E:usefulidentity1},~\eqref{E:usefulidentity2} and the equality $\left\lceil \frac{x}{2} \right\rceil + \left\lfloor \frac{x}{2} \right\rfloor = x$ for integer $x$, we find that the sum equals
\begin{equation}
d^{\sum_{m=1}^{\#(\tau^{-1})} \left\lceil \frac{|C_m|}{2} \right\rceil} = d^{T - \sum_{m=1}^{\#(\tau^{-1})} \left\lfloor \frac{|C_m|}{2} \right\rfloor} \leq d^{T - \left\lfloor \frac{L(\tau^{-1})}{2}\right\rfloor}
\end{equation}
giving the desired bound.
\end{proof}
An immediate consequence of the above Lemma is that
\begin{align}
\label{E:ineqtofloor1}
&\sum_{\ell \, \in \, \text{leaf}(\mathcal{T})} \left(\frac{1}{2^{\#(\tau^{-1})}} \sum_{i_1,...,i_{\#(\tau^{-1})}= \pm} R^\ell_{1,i_1} R^\ell_{2,i_2} \cdots R^\ell_{\#(\tau^{-1}), i_{\#(\tau^{-1})}}\right) \nonumber \\
=\,\,& \frac{1}{2^{\#(\tau^{-1})}} \sum_{i_1,...,i_{\#(\tau^{-1})}= \pm} \left(\sum_{\ell \, \in \, \text{leaf}(\mathcal{T})} R^\ell_{1,i_1} R^\ell_{2,i_2} \cdots R^\ell_{\#(\tau^{-1}), i_{\#(\tau^{-1})}}\right) \nonumber \\
\leq\,\,& \frac{1}{2^{\#(\tau^{-1})}} \sum_{i_1,...,i_{\#(\tau^{-1})}= \pm} d^{T - \left\lfloor \frac{L(\tau^{-1})}{2}\right\rfloor} \nonumber \\
=\,\,& d^{T - \left\lfloor \frac{L(\tau^{-1})}{2}\right\rfloor}\,.
\end{align}
Circling back to~\eqref{E:thirdtermfirstfig}, we can combine our bounds to obtain
\begin{equation}
\label{E:usefullater1}
\frac{1}{2}\sum_{\ell \,\in\,\text{leaf}(\mathcal{T})} \sum_{\substack{\tau \not = \mathds{1} \\ \sigma}} |p_{\sigma, \tau}(\ell)| \leq \frac{d^{T}}{2} \sum_{\sigma} |\text{Wg}^U(\sigma^{-1}, d)| \sum_{\tau \not = \mathds{1}} d^{- \left\lfloor \frac{L(\tau^{-1})}{2} \right\rfloor}\,.
\end{equation}
To bound the right-hand side, we can use $d^{T} \sum_{\sigma} |\text{Wg}^U(\sigma^{-1}, d)|  \leq 1 + O(T^2/d)$.  Letting $N(T,L)$ denote the number of permutations of $S_T$ whose longest cycle has length $L$, we can write
\begin{equation}
\sum_{\tau \not = \mathds{1}} d^{- \left\lfloor \frac{L(\tau^{-1})}{2} \right\rfloor} = \sum_{L=2}^T N(T,L) \, d^{- \left\lfloor \frac{L}{2}\right\rfloor}\,.
\end{equation}
Here the $L = 1$ case is omitted since this corresponds to the identity permutation.  Since $N(T,L) \leq \binom{T}{L} L! = \frac{T!}{(T-L)!} < T^L$, the above sum is upper bounded by
\begin{equation}
\sum_{L = 2}^\infty T^L d^{- \left\lfloor \frac{L}{2}\right\rfloor} = \frac{(1 + T) \frac{T^2}{d}}{1 - \frac{T^2}{d}} = \frac{T^3}{d} + \frac{T^2}{d} + O\left(\frac{T^5}{d^2}\right)\,.
\end{equation}
In summary, if $T \leq o(d^{1/3})$ we have
\begin{equation}
\frac{1}{2} \sum_{\ell \, \in \, \text{leaf}(\mathcal{T})} \sum_{\substack{\tau \not = \mathds{1} \\ \sigma}} |p_{\sigma, \tau}(\ell)| \leq o(1)\,,
\end{equation}
as needed.
\\ \\
Combining the three cases, we find that for $T \leq \Omega(d^{1/3})$ we have
\begin{equation}
\frac{1}{2}\sum_{\ell \,\in\, \text{\rm leaf}(\mathcal{T})} |p^{\mathcal{D}}(\ell) - (\mathbb{E}_{\mathcal{U}} \,p^{\mathcal{U}}(\ell))| \leq \frac{1}{3}
\end{equation}
which completes the proof.
\end{proof}

\subsection{Product unitary channel versus maximally depolarizing channel}

For the next proposition we use the same notations and conventions as we did above.

\begin{proposition}
For $T \leq \Omega(d^{1/4})$, we have
\begin{equation}
\label{E:1normineq1p2}
\frac{1}{2}\sum_{\ell \,\in\, \text{\rm leaf}(\mathcal{T})} |p^{\mathcal{D}}(\ell) - \mathbb{E}_{\mathcal{U}_1, \mathcal{U}_2}[p^{\mathcal{U}_1 \otimes \mathcal{U}_2}(\ell)]| \leq \frac{1}{3}\,.
\end{equation}
\end{proposition}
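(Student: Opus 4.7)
The plan is to adapt the proof of Proposition \ref{Prop:prop1} to the product-unitary setting. Since Haar integration over $U_1$ and $U_2$ factorizes, applying the Weingarten calculus on each subsystem (of dimension $d^{1/2}$) gives
\begin{equation}
\mathbb{E}_{\mathcal{U}_1, \mathcal{U}_2}[p^{\mathcal{U}_1 \otimes \mathcal{U}_2}(\ell)] = \sum_{\sigma_1, \sigma_2, \tau_1, \tau_2 \in S_T} p_{\sigma_1, \sigma_2, \tau_1, \tau_2}(\ell),
\end{equation}
where, after a change of summation variables as in Proposition \ref{Prop:prop1}, each summand carries a weight $\text{Wg}^U(\sigma_1^{-1}, d^{1/2})\,\text{Wg}^U(\sigma_2^{-1}, d^{1/2})$ and a diagrammatic contraction in which $(\sigma_j, \tau_j)$ acts as a permutation on the $T$ copies of subsystem $j \in \{1,2\}$. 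I would apply the triangle inequality to split the left-hand side of \eqref{E:1normineq1p2} into three contributions mirroring those in the proof of Proposition \ref{Prop:prop1}: a diagonal piece involving $p^\mathcal{D}(\ell) - p_{\mathds{1},\mathds{1},\mathds{1},\mathds{1}}(\ell)$; a ``$\sigma$-piece'' with $\tau_1 = \tau_2 = \mathds{1}$ but $(\sigma_1, \sigma_2) \neq (\mathds{1},\mathds{1})$; and a ``$\tau$-piece'' in which some $\tau_j \neq \mathds{1}$.

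For the diagonal piece, the POVM-completeness argument of Proposition \ref{Prop:prop1} still collapses the diagrammatic factor to $d^T$ after summing over leaves, and applying Corollary \ref{E:corr1} to each factor of $\text{Wg}^U(\mathds{1}, d^{1/2})$ shows $|\text{Wg}^U(\mathds{1}, d^{1/2})^2 - d^{-T}| = O(T^{7/2}/d^{T+1/2})$. Multiplying, this piece is $O(T^{7/2}/d^{1/2}) = o(1)$ in the desired regime. For the $\sigma$-piece, I would run the same Cauchy--Schwarz and H\"older manipulation used for the second term of Proposition \ref{Prop:prop1}, which still yields an overall diagrammatic factor of $d^T$. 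The Weingarten part is bounded by applying Lemma 6 of \cite{aharonov2021quantum} independently on each subsystem:
\begin{equation}
\sum_{(\sigma_1, \sigma_2) \neq (\mathds{1},\mathds{1})} |\text{Wg}^U(\sigma_1^{-1}, d^{1/2})|\,|\text{Wg}^U(\sigma_2^{-1}, d^{1/2})| \leq \frac{O(T^2/d^{1/2})}{d^T},
\end{equation}
using $(d^{1/2}-T)!/(d^{1/2})! = d^{-T/2}(1 + O(T^2/d^{1/2}))$ for $T \ll d^{1/2}$. Multiplying gives an $O(T^2/d^{1/2})$ bound on this piece, which is $o(1)$ exactly when $T \leq o(d^{1/4})$; this is the tightest of the three contributions and dictates the final scaling.

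For the $\tau$-piece, I would extend the cycle-length argument of Proposition \ref{Prop:prop1} to two permutations acting on disjoint subsystems. After Cauchy--Schwarz and H\"older, the 1-norm factorizes over cycles of $\tau_1^{-1}$ on subsystem 1 and of $\tau_2^{-1}$ on subsystem 2, producing products of partial-trace expressions in the $F_{v_i}^\dagger F_{v_i}$. The even/odd cycle-splitting trick and the bottom-up adaptive summation over leaves both carry over, with the key modification that the completeness relation $\sum_v F_v^\dagger F_v = \mathds{1}_d$ acts on the full $n$-qubit system and its partial trace onto a single subsystem produces $d^{1/2}\cdot\mathds{1}_{d^{1/2}}$. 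Tracking these $d^{1/2}$ factors per ``wasted'' leaf sum, applying the Weingarten bound on each subsystem, and invoking the combinatorial count of permutations of $S_T$ by longest cycle length should yield a bound no worse than the $\sigma$-piece, so all three pieces are $o(1)$ for $T \leq o(d^{1/4})$ and the sum is bounded by $1/3$ for sufficiently large $d$.

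The main obstacle is the $\tau$-piece: the single-permutation leaf-sum lemma of Proposition \ref{Prop:prop1} must be generalized to two permutations acting on disjoint subsystems, and the $d^{1/2}$ factors arising from partially traced completeness relations must be carefully matched against the $d^{1/2}$-indexed Weingarten suppression on each subsystem. Writing down the correct analog of the bound $\sum_\ell R^\ell_{1,i_1} \cdots R^\ell_{\#(\tau^{-1}), i_{\#(\tau^{-1})}} \leq d^{T - \lfloor L(\tau^{-1})/2 \rfloor}$ in a way that respects the adaptive tree structure and the split subsystem geometry is the main technical hurdle.
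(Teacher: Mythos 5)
Your overall approach — the triangle-inequality split into a diagonal piece, a $\sigma$-piece (with $\tau_1=\tau_2=\mathds{1}$), and a $\tau$-piece — is exactly the paper's, and so is the Weingarten factorization over the two subsystems of dimension $d^{1/2}$.

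There is, however, an arithmetic error in your first term that matters.  Corollary~\ref{E:corr1} with $d \to d^{1/2}$ and $k=T$ gives $|\text{Wg}^U(\mathds{1},d^{1/2}) - d^{-T/2}| \leq O(T^{7/2}/d^{T/2+1})$, and squaring gives
\begin{equation}
\left|\text{Wg}^U(\mathds{1},d^{1/2})^2 - d^{-T}\right| \leq O\!\left(\frac{T^{7/2}}{d^{T+1}}\right),
\end{equation}
not $O(T^{7/2}/d^{T+1/2})$ as you wrote; after multiplying by $d^T/2$ the diagonal piece is $O(T^{7/2}/d)$, as in the paper.  Your weaker claim $O(T^{7/2}/d^{1/2})$ is \emph{not} $o(1)$ for $T \sim d^{1/4}$ — it grows as $d^{3/8}$ — so if taken at face value your first term would be the bottleneck and would give a strictly weaker proposition than stated.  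With the corrected exponent, the first term requires only $T = o(d^{2/7})$, and the $\sigma$-piece at $O(T^2/d^{1/2})$ becomes the genuine bottleneck, yielding $T = o(d^{1/4})$ as you intended.

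On the $\tau$-piece, you correctly identify the technical hurdle — generalizing the leaf-sum lemma to a pair $(\tau_1,\tau_2)$ of independent permutations acting on disjoint subsystems — but you stop at a sketch involving partial traces of completeness relations rather than producing a concrete analogue of the bound $\sum_\ell R^\ell_{1,i_1}\cdots \leq d^{T-\lfloor L(\tau^{-1})/2\rfloor}$.  The paper takes the shortcut of writing $\sigma = \sigma_1\otimes\sigma_2$, $\tau = \tau_1\otimes\tau_2$, asserting that the argument of Proposition~\ref{Prop:prop1} carries through up to~\eqref{E:ineqtofloor1}, and then bounding $\sum_{\tau_1\otimes\tau_2\neq\mathds{1}} d^{-\lfloor L(\tau^{-1})/2\rfloor}$ by the same single-permutation sum $\sum_{\tau\neq\mathds{1}} d^{-\lfloor L(\tau^{-1})/2\rfloor} \leq T^3/d + T^2/d + O(T^5/d^2)$.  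Your instinct that the partial traces of $\sum_v F_v^\dagger F_v = \mathds{1}_d$ contribute $d^{1/2}$ factors is the right thing to track if one carries this step out in full, but as written your proposal leaves the key lemma unproved; this is the part that requires the most care and you have flagged it appropriately rather than asserting a bound you did not derive.
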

\begin{proof}
As before $p^{\mathcal{D}}(\ell) = 1/d^T$, and now we have
\begin{align}
    \includegraphics[scale=.28, valign = c]{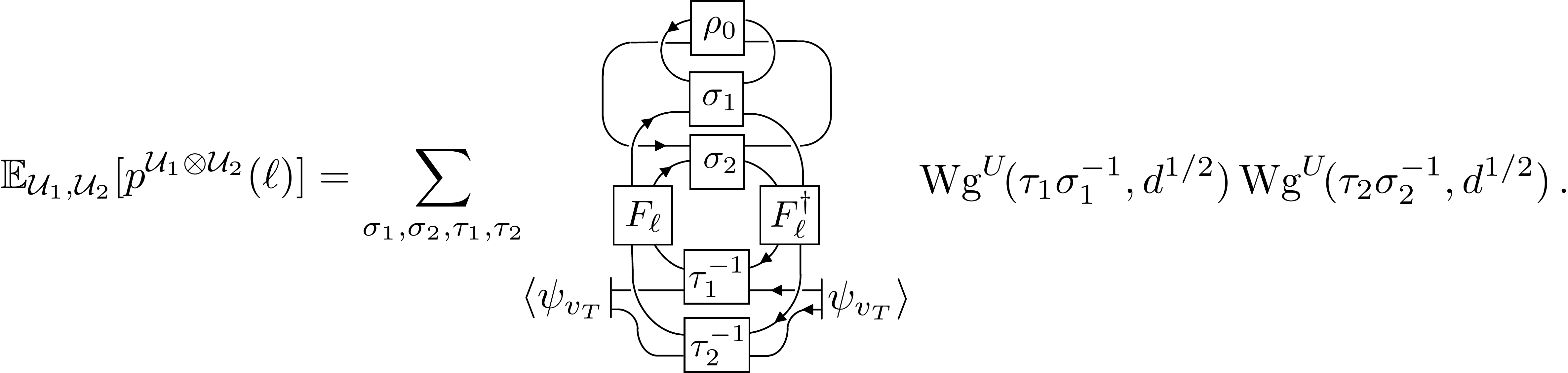}
\end{align}
We let $p_{\sigma_1, \sigma_2,\tau_1, \tau_2}(\ell)$ denote the summand of the above.  The left-hand side of~\eqref{E:1normineq1p2} can be upper bounded via the triangle and Cauchy-Schwarz inequalities as
\begin{align}
 \label{E:1normineq2p2}
\frac{1}{2}\sum_{\ell \,\in\, \text{\rm leaf}(\mathcal{T})} |p^{\mathcal{D}}(\ell) - \mathbb{E}_{\mathcal{U}_1,\mathcal{U}_2}[ p^{\mathcal{U}_1 \otimes \mathcal{U}_2}(\ell)]| &\leq \frac{1}{2}\sum_{\ell \,\in\, \text{\rm leaf}(\mathcal{T})} |p^{\mathcal{D}}(\ell) - p_{\mathds{1},\mathds{1},\mathds{1},\mathds{1}}(\ell)| + \frac{1}{2}\sum_{\ell \,\in\, \text{\rm leaf}(\mathcal{T})} \sum_{\sigma_1 \otimes \sigma_2 \not = \mathds{1} \otimes \mathds{1}}| p_{\sigma_1, \sigma_2, \mathds{1},\mathds{1}}(\ell)|  \nonumber \\
&\qquad \qquad \qquad \qquad \qquad \quad \quad + \frac{1}{2}\sum_{\ell \,\in\, \text{\rm leaf}(\mathcal{T})} \sum_{\substack{\tau_1 \otimes \tau_2 \not = \mathds{1} \otimes \mathds{1} \\ \sigma_1, \sigma_2}}|p_{\sigma_1, \sigma_2, \tau_1, \tau_2}(\ell)|\,.
\end{align}
Similar to the previous Proposition, we will individually bound each term on the right-hand side of~\eqref{E:1normineq2p2}. \\ \\
\textbf{First term} \\ \\
Applying the Cauchy-Schwarz inequality we have the upper bound
\begin{align}
    \includegraphics[scale=.28, valign = c]{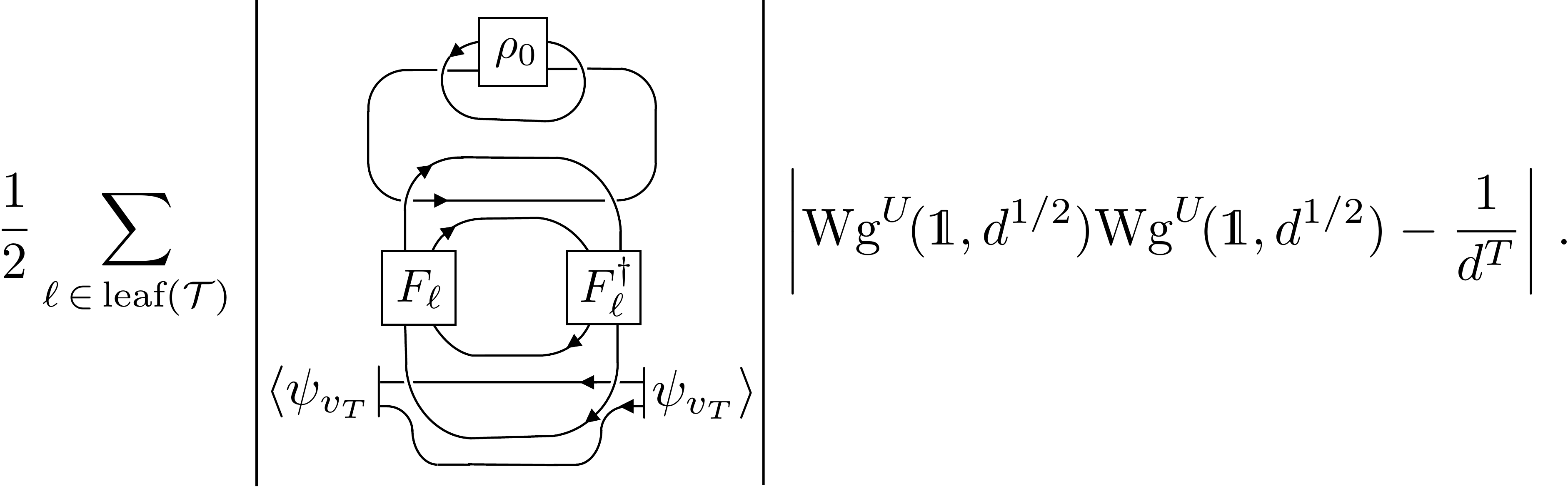}
\end{align}
Since the argument of the first term is positive, we remove the absolute values; this allows us to explicitly sum over leafs to obtain
\begin{equation}
\frac{d^T}{2}\left|\text{Wg}^{U}(\mathds{1},d^{1/2}) \text{Wg}^{U}(\mathds{1},d^{1/2}) - \frac{1}{d^T}\right|\,.
\end{equation}
The appropriate version of Corollary~\ref{E:corr1} gives us $\left|\text{Wg}^{U}(\mathds{1},d^{1/2}) - \frac{1}{d^{T/2}}\right| \leq O(T^{7/2}/d^{T/2+1})$ for $T < \left(\frac{d}{6}\right)^{2/7}$, and thus
\begin{equation}
\frac{1}{2}\sum_{\ell \,\in\, \text{\rm leaf}(\mathcal{T})} |p^{\mathcal{D}}(\ell) - p_{\mathds{1},\mathds{1},\mathds{1},\mathds{1}}(\ell)| \leq O\!\left(\frac{T^{7/2}}{d}\right)\,.
\end{equation}
$$$$
\textbf{Second term}
\\ \\
We can apply the Cauchy-Schwarz inequality to the second term on the right-hand side of~\eqref{E:1normineq2p2} to get the upper bound
\begin{align}
\label{E:secondtermineq1p2}
    \includegraphics[scale=.28, valign = c]{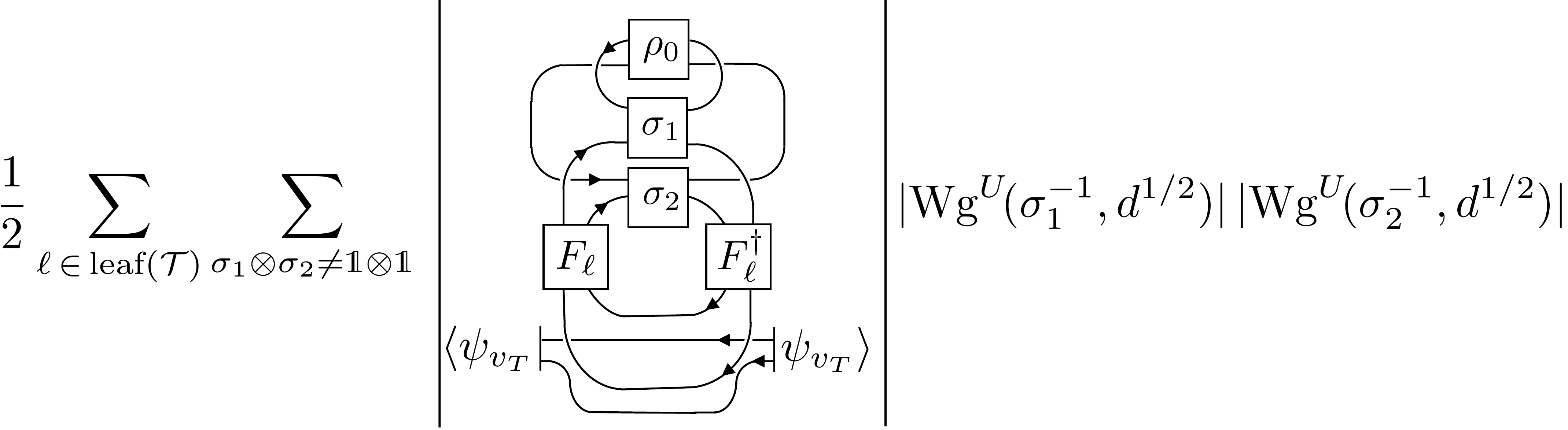}
\end{align}
The above can be bounded in the same manner as~\eqref{E:secondtermineq1} in Proposition~\ref{Prop:prop1}; the proof is the same up through~\eqref{E:usefulmarker1}.  Then the analog of~\eqref{E:usefulmarker2} is
\begin{align}
    \includegraphics[scale=.28, valign = c]{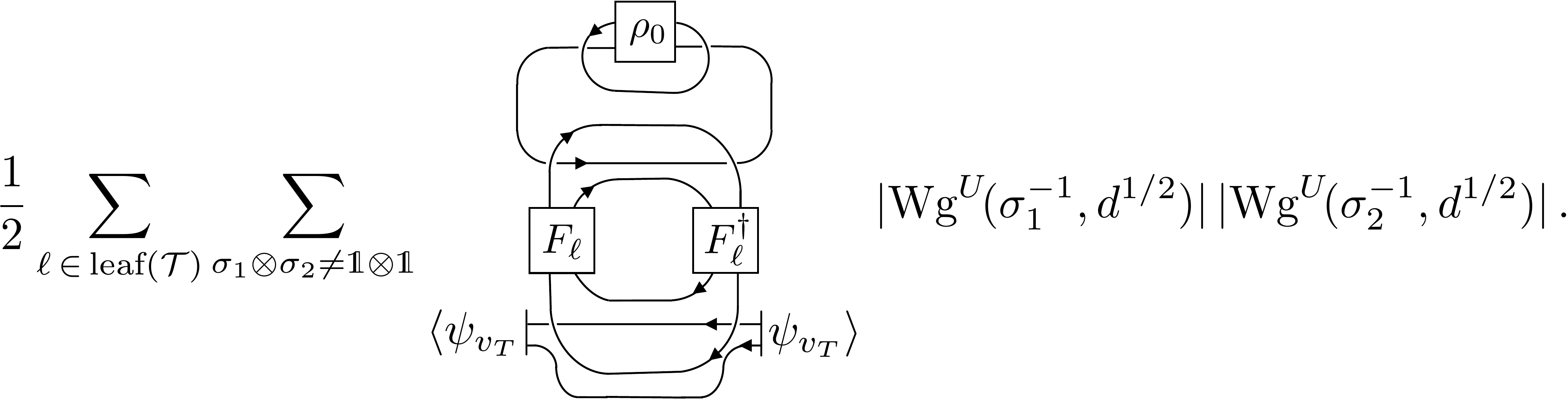}
\end{align}
and summing over leafs we find
\begin{equation}
\frac{d^T}{2}\sum_{\sigma_1 \otimes \sigma_2 \not = \mathds{1} \otimes \mathds{1}} |\text{Wg}^U(\sigma_1^{-1},d^{1/2})|\,|\text{Wg}^U(\sigma_2^{-1},d^{1/2})|\,.
\end{equation}
Denoting $\sigma = \sigma_1 \otimes \sigma_2$, the above is equal to
\begin{align}
2 \cdot \frac{d^{T/2}}{2} \,|\text{Wg}^U(\mathds{1},d^{1/2})|\left(d^{T/2} \sum_{\substack{\sigma \not = \mathds{1} \\ \sigma \in S_T}}|\text{Wg}^U(\sigma^{-1},d^{1/2})|\right) + \frac{1}{2}\left(d^{T/2}\sum_{\substack{\sigma \not = \mathds{1} \\ \sigma \in S_T}}|\text{Wg}^U(\sigma^{-1},d^{1/2})|\right)^2\,.
\end{align}
Since $|\text{Wg}^U(\mathds{1},d^{1/2})| = \frac{1}{d^{T/2}} + O(T^{7/2}/d^{T/2 + 1})$ for $T < \left(\frac{d}{6}\right)^{2/7}$ and the term in the parentheses is less than or equal to $O(T^2/d^{1/2})$ by Lemma 6 of~\cite{aharonov2021quantum}, we have in total
\begin{equation}
 \frac{1}{2}\sum_{\ell \,\in\, \text{\rm leaf}(\mathcal{T})} \sum_{\sigma_1 \otimes \sigma_2 \not = \mathds{1} \otimes \mathds{1}}| p_{\sigma_1, \sigma_2,\mathds{1}, \mathds{1}}(\ell)|  \leq O\!\left(\frac{T^2}{d^{1/2}}\right)\,.
\end{equation}
$$$$
\textbf{Third term}
\\ \\
Applying the Cauchy-Schwarz inequality to final term in~\eqref{E:1normineq2p2} we find
\begin{align}
\label{E:thirdtermfirstfigp2}
    \includegraphics[scale=.28, valign = c]{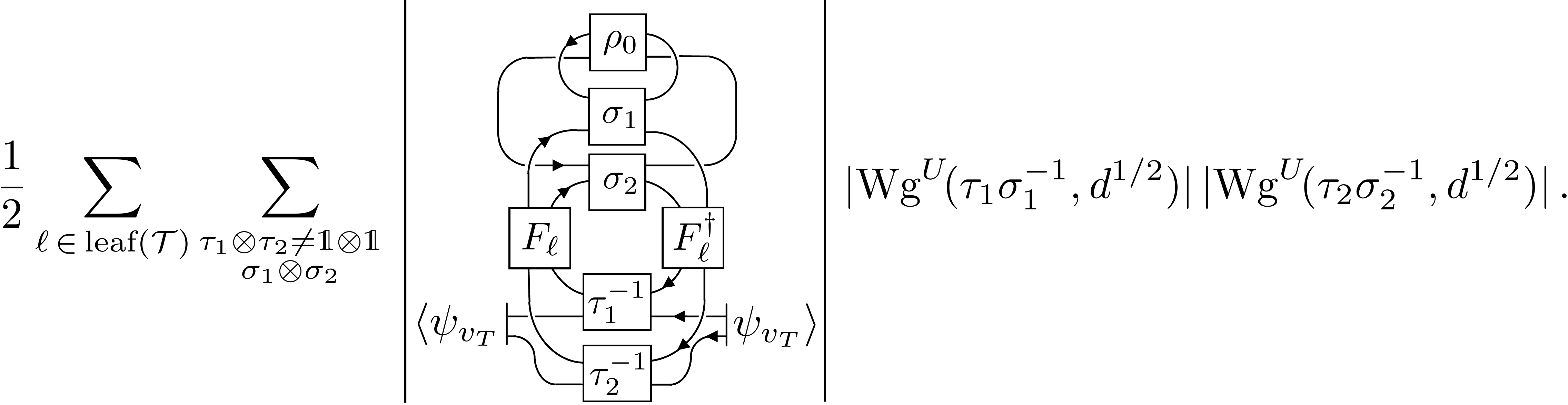}
\end{align}
If we label $\sigma = \sigma_1 \otimes \sigma_2$ and $\tau = \tau_1 \otimes \tau_2$, the proof proceeds identically to the third case of Proposition~\ref{Prop:prop1} up through~\eqref{E:ineqtofloor1}.  Then the new analog of~\eqref{E:usefullater1} is
\begin{equation}
\sum_{\ell \,\in\,\text{leaf}(\mathcal{T})} \sum_{\substack{\tau_1 \otimes \tau_2 \not = \mathds{1} \otimes \mathds{1} \\ \sigma_1, \sigma_2}} |p_{\sigma_1, \sigma_2 , \tau_1, \tau_2}(\ell)| \leq d^{T} \sum_{\sigma_1, \sigma_2} |\text{Wg}^U(\sigma_1^{-1}, d^{1/2})|\, |\text{Wg}^U(\sigma_2^{-1}, d^{1/2})| \sum_{\tau_1 \otimes \tau_2 \not = \mathds{1} \otimes \mathds{1}} d^{- \left\lfloor \frac{L(\tau^{-1})}{2} \right\rfloor}\,.
\end{equation}
The right-hand side can be bounded in part using $d^{T/2} \sum_{\sigma} |\text{Wg}^U(\sigma^{-1}, d^{1/2})|  \leq 1 + O(T^2/d^{1/2})$.  Since
\begin{equation}
\sum_{\substack{\tau_1 \otimes \tau_2 \not = \mathds{1} \otimes \mathds{1} \\ \tau_1, \tau_2 \,\in\, S_T}} d^{- \left\lfloor \frac{L(\tau^{-1})}{2}\right\rfloor}  \leq \sum_{\substack{\tau \not =  \mathds{1} \\ \tau\,\in\, S_T}} d^{- \left\lfloor \frac{L(\tau^{-1})}{2}\right\rfloor} \leq \frac{T^3}{d} + \frac{T^2}{d} + O\left(\frac{T^5}{d^2}\right)
\end{equation}
where the last bound comes from Proposition~\ref{Prop:prop1}, we find that if $T \leq o(d^{1/4})$ then
\begin{equation}
\frac{1}{2} \sum_{\ell \, \in \, \text{leaf}(\mathcal{T})} \sum_{\substack{\tau_1 \otimes \tau_2 \not = \mathds{1} \\ \sigma_1, \sigma_2}} |p_{\sigma_1, \sigma_2, \tau_1, \tau_2}(\ell)| \leq o(1)\,.
\end{equation}
$$$$
Putting all three cases together, we see that for $T \leq \Omega(d^{1/4})$ we have
\begin{equation}
\frac{1}{2}\sum_{\ell \,\in\, \text{\rm leaf}(\mathcal{T})} |p^{\mathcal{D}}(\ell) - \mathbb{E}_{\mathcal{U}_1, \mathcal{U}_2} [p^{\mathcal{U}_1 \otimes \mathcal{U}_2}(\ell)]| \leq \frac{1}{3}\,.
\end{equation}
\end{proof}

\bibliographystyle{unsrtnat}
\bibliography{refs}

\end{document}